\pdfoutput=1
% \documentclass{l4dc2026}
% If your paper is accepted, please use the "final" option
% for the camera-ready version
\documentclass[]{l4dc2026}
% \usepackage{apxproof}

% The following packages will be automatically loaded:
% amsmath, amssymb, natbib, graphicx, url, algorithm2e
\usepackage{amsfonts}  % Additional math fonts
\usepackage{mathtools} % Extends amsmath
\usepackage{commath}   % Enhanced math commands
\usepackage{cancel}    % For strike-through math
\usepackage{bm}        % Bold math symbols
\usepackage{xcolor}    % For colored text
\usepackage[activate={true,nocompatibility}]{microtype} % Better typography
\usepackage{etoolbox}  % Toolbox for LaTeX programming
\usepackage{comment}   % For block comments
\usepackage{mdframed}  % For framed environments
\usepackage{xspace}    % For better spacing after macros
\usepackage{booktabs}  % For better looking tables
\usepackage{siunitx}   % For SI units and numbers
\usepackage{float}     % For better float placement
\usepackage{doi}          % For proper DOI formatting

% xcolor is already loaded by l4dc2026
% \newtheorem{definition}{Definition}
% \newtheorem{proposition}{Proposition}
% \newtheorem{conjecture}{Conjecture}
\newtheorem{fact}{Fact}
\newtheorem{assum}{Assumption}
\newtheorem{problem}{Problem}
\DeclareMathOperator{\trace}{\operatorname{tr}}
\DeclareMathOperator{\expect}{\mathbb{E}}
\DeclareMathOperator{\probability}{\mathbb{P}}
\newcommand{\clip}[2][\lambda]{\ensuremath{\left[#2\right]_{\vee #1}}}

% pseudo-true
\title{Instrumental variables system identification with \(L^p\) consistency}
\usepackage{times}
% Use \Name{Author Name} to specify the name.
% If the surname contains spaces, enclose the surname
% in braces, e.g. \Name{John {Smith Jones}} similarly
% if the name has a "von" part, e.g \Name{Jane {de Winter}}.
% If the first letter in the forenames is a diacritic
% enclose the diacritic in braces, e.g. \Name{{\'E}louise Smith}

% Two authors with the same address
% \coltauthor{\Name{Author Name1} \Email{abc@sample.com}\and
%  \Name{Author Name2} \Email{xyz@sample.com}\\
%  \addr Address}

% Three or more authors with the same address:
% \coltauthor{\Name{Author Name1} \Email{an1@sample.com}\\
%  \Name{Author Name2} \Email{an2@sample.com}\\
%  \Name{Author Name3} \Email{an3@sample.com}\\
%  \addr Address}

% Authors with different addresses:
\coltauthor{%
 \Name{Simon Kuang} \Email{slku@ucdavis.edu}
%  \addr University of California, Davis
 \AND
 \Name{Xinfan Lin} \Email{lxflin@ucdavis.edu}\\
 \addr University of California, Davis%
}

% For some reason, commath's cbr is broken in the JMLR class and it behaves like mathcal?
% Let's just fix it easily.
\renewcommand{\cbr}[1]{\ensuremath{\left\{#1\right\}}}
% \pdfoutput=1
%===============================================================================
% ifacconf.tex 2022-02-11 jpuente
% 2022-11-11 jpuente change length of abstract
% Template for IFAC meeting papers
% Copyright (c) 2022 International Federation of Automatic Control
%===============================================================================
% \documentclass{ifacconf}
% \documentclass[twocolumn]{autart}

% \usepackage{xcolor, graphicx}
% \usepackage{comment}
% \usepackage[T1]{fontenc}
% \usepackage[utf8]{inputenc}
% \usepackage[hyphens]{url}
% \usepackage{microtype}
% \usepackage{algpseudocode}
% \usepackage{bm}
% \usepackage{amsmath, amsfonts, mathtools, amssymb, bm, commath}

% \DeclareMathOperator{\trace}{\operatorname{tr}}
% \DeclareMathOperator{\expect}{\mathbb{E}}
% \DeclareMathOperator{\probability}{\mathbb{P}}

\DeclareMathOperator{\var}{\operatorname{var}}
\DeclareMathOperator{\frob}{\operatorname{F}}

% for conditional compilation
\newtoggle{preprint}
\settoggle{preprint}{true}
% \iftoggle{preprint}{%
%       \newenvironment{skippedproof}[1]{%
%             %   \sffamily%
%             %   \begin{mdframed}[backgroundcolor=green!10]%
%             \begin{pf}{#1}%
%             }{%
%             \end{pf}%
%             %   \end{mdframed}%
%       }%
% }{\excludecomment{skippedproof}}

% \newenvironment{before}{\color{red}}{}
\newenvironment{after}{\color{red!90!black}}{\ignorespacesafterend}
\renewenvironment{after}{}{}
% \excludecomment{before}
% \newenvironment{after}{\begincomment}{\endcomment}
% \newcomment{before}

%===============================================================================
\begin{document}
\maketitle
\begin{abstract}
Instrumental variables (IV) eliminate the bias that afflicts least-squares identification of dynamical systems through noisy data, yet traditionally relies on external instruments that are seldom available for nonlinear time series data.
We propose an IV estimator that synthesizes instruments from the data.
We establish finite-sample \(L^{p}\) consistency for \emph{all} \(p \ge 1\) in both discrete- and continuous-time models, recovering a nonparametric \(\sqrt{n}\)-convergence rate.
On a forced Lorenz system our estimator reduces parameter bias by 200x (continuous-time) and 500x (discrete-time) relative to least squares and reduces RMSE by up to tenfold.
Because the method only assumes that the model is linear in the unknown parameters, it is broadly applicable to modern sparsity-promoting dynamics learning models.
\end{abstract}

\begin{keywords}
      system identification,finite-sample,instrumental variables
\end{keywords}

% \end{frontmatter}

% %===============================================================================

\section{Introduction}
Filtering, smoothing, prediction, and control benefit from accurate models of a plant's dynamics.
Our present data-rich world facilitates and demands dynamics models that can learn efficiently from long time series.
Just as in large-scale machine learning, recurrent neural network architectures are being succeeded by autoregressive architectures such as transformers \citep{vaswani_attention_2023},
it is now preferable to specify an autoregressive model that maps past outputs to future outputs, rather than a state-space model that advances an unobserved state.
We further make the simplification \citep{brunton_discovering_2016,mezic_koopman_2021} of assuming a parametric form in which the prediction is nonlinear in the inputs (past outputs, exogenous inputs, etc.)
but linear in the parameter vector.
Thus the parameter vector may be estimated (following a feature selection process) by processing the data with a linear filter, applying a nonlinearity, and then running least squares estimation.
This is common practice in engineering \citep{kutz_dynamic_2016,haller_modeling_2025}.

But least-squares estimation is problematic when viewed as point estimation of the parameter vector.
Assuming that the data is contaminated by realistic levels of noise (such as from quantization \citep{maity_effect_2025}), 
least-squares regressions of (noisy) future outputs on (noisy)
past outputs are biased \citep[Chapter 8]{kutz_dynamic_2016} \citep{yi_handbook_2021,soderstrom_errors--variables_2018}.
For this reason, an autoregressive linear system identification method introduced in the 1970s has since been supplanted by instrumental variables (IV) and other bias-avoiding methods \citep[Chapter 1]{garnier_identification_2008} \citep{gonzalez_continuous-time_2022}.

The current wave of autoregressive system identification methods recognizes that output measurement noise degrades parameter estimation with a bias that persists even at the low noise levels achievable by judicious data prefiltering \citep{brunton_discovering_2016, wentz_derivative-based_2023,hsin_symbolic_2024}.
Therefore, it is not enough to filter away the noise and then pretend it is not there.
Two bias mitigation methods are bias correction, which depends strongly on accurate estimation of the noise variance, and instrumental variables, which does not \citep{garnier_identification_2008}.

% The challenge of this problem is that neither side of regression equation \eqref{eq:model} is measured directly.
We use instrumental variables to counteract the bias due to noise in the right-hand side of the model equation.
In IV problems in econometrics and engineering, an instrumental variable is provided along with the data or generated using model predictions \citep{davidson_econometric_2004,gonzalez_continuous-time_2022}, and one usually settles for asymptotic normality \citep{pan_efficiency_2020} because the IV estimator has no finite expectation \citep[\S8.4]{davidson_econometric_2004}.
\begin{after}
      \citet{zeiringer_instrumental_2026} use the model predictions to synthesize instruments for a linear-in-the-parameters nonlinear system, but it is hard to get statistical guarantees on the resulting estimator.
\end{after}
We show that under reasonable sampling assumptions, the instruments can be synthesized from the same data as the regressors.
\begin{after}
      Whereas the vanilla IV estimator's heavy tails 
      result from inverting a matrix whose probability mass may concentrate near singularity,
      our estimator imposes minor regularizations,
      leading to a finite expectation and consistency in \(L^p\) for all \(p \geq 1\).
\end{after}

\subsection*{Contributions}
We define an instrumental variables estimator by applying local polynomial regression \citep{de_brabanter_derivative_2013} in a new way 
at the data filtering step.
We analyze the benefits of singular value truncation, a form of ridge regularization, and instrumental variable truncation, which is a way to convert a subexponential tail to a subgaussian tail.
The final consistency result, stated in Theorem~\ref{thm:consistency}, requires delicate tail bounds originally developed for a biased estimator in \cite{kuang_estimation_2024-1}.

\subsection*{Outline}
We state the problem (discrete and continuous cases) in \S\ref{sec:problem}.
We construct the estimator in \S\ref{sec:algorithm} and state the theoretical principles behind its design.
We state the main theoretical result in Theorem \ref{thm:consistency}.
We apply it to the Lorenz system (discrete and continuous cases) in \S\ref{sec:lorenz}.

\section{Notation and conventions}
We write \(x \vee y\) for \(\max(x, y)\) and \(x \wedge y\) for \(\min(x, y)\).
\begin{after}
We write \(C^m(A, B)\) for the space of \(m\)-times continuously differentiable functions between Euclidean spaces \(A\) and \(B\).
\end{after}

Unless otherwise specified, the notation \(\left\|A\right\|\) refers
to the operator 2-norm of \(A\) or the \(2\)-norm of a vector.
A subscript denotes a stochastic \(L^q\) norm: \(\left\|x\right\|_{q}
= \del{\expect \left\|x\right\|^q}^{1/q}\).

The sub-gaussian norm of a random vector or matrix \(X\) is
\begin{gather*}
      \left\|X\right\|_{\psi_2}
       = \inf\cbr{t > 0: \expect
\exp\left(\frac{\left\|X\right\|^2}{t^2}\right) \leq 2}.
\end{gather*}
The centered sub-gaussian norm of a random vector or matrix \(X\) is
\(\left\|X -\expect X\right\|_{\psi_2}\).

The variable \(C\) denotes a constant that may depend on the
dimension of the problem (\(\mathsf{d_y}\) and \(\mathsf{d_\phi}\)),
and its value may change from line to line.
It never depends on the key statistical variables \(n\), \(N\), or \(h\).\footnote{Eliding
      dimensionality constants into \(C\) effectively commits our
      analysis to the ``classical'' low-dimensional regime, in which the design and parameter matrices are generic, full rank, etc.
      We reserve for future work the high-dimensional regime where
intrinsic dimension may be far less than the ambient dimension.}

\section{Problem statement}
\label{sec:problem}
We adopt the Output Error model of a deterministic system with a single unknown \(\theta_0\) \citep[p.~31]{ljung_perspectives_2010}.
The data \(\{z_i\}_{i = 1}^{n}\) comprises measurements of the signal \(y\) at a sampling period \(h > 0\), plus noise \(\{\epsilon_i\}_{i = 1}^{n}\):
\begin{align}
      z_i = y(ih) + \epsilon_i, \quad i \in [1 \ldots n].
\end{align}
% where \(\{\epsilon_i\}_{i = 1}^{n}\) are independent random variables
% with mean zero.
The time horizon is \(T = nh\).
The signal \(y \in C^m([0, T], \mathbb{R}^{\mathsf d_y})\) is taken as fixed (only the noise is random), and satisfies
\begin{align}
      \label{eq:model}
      \mathcal H y(t)
      &= \theta_0^\intercal \phi(t, \mathcal G y(t)),
\end{align}
where \(n\) is the number of observations,
\(h\) is the sampling period,
 \(m > 0\) is the smoothness of \(y\),
\(\phi \in C^1([0, T] \times \mathbb{R}^{\mathsf d _{\mathcal
G}}, \mathbb{R}^{\mathsf{d}_\phi})\)
is a static nonlinearity,
\(\theta_0 \in
\mathbb{R}^{\mathsf{d}_\phi \times \mathsf{d}_{\mathcal H}}\),
and
\(\mathcal H : C^m([0, T], \mathbb{R}^{\mathsf d_y})\to C([0, T], \mathbb{R}^{\mathsf d_{\mathcal H}})\)
and 
\(\mathcal G: C^m([0, T], \mathbb{R}^{\mathsf d_y})\to C([0, T], \mathbb{R}^{\mathsf d_{\mathcal G}})\)
 are \emph{known} linear operators with
compatible dimensions.
Our paper concerns the cases: (a) \emph{discrete-time}, where \(\mathcal H\) is the left shift \(\mathcal H x(t) = x(t + \tau)\) for some \(\tau > 0\) and \(\mathcal G\) is the identity; (b) first-order \emph{continuous-time}, where \(\mathcal H = \partial_t\) and \(\mathcal G\) is the identity; and (c) continuous-time autoregression, where \(\mathcal H\) is a high-order time derivative and \(\mathcal G\) contains lower-order derivatives.
% which correspond to the discrete- and continuous-time SINDy models,
% respectively.
\begin{after}
\begin{example}
      For example, a reduced-order model of vortex shedding behind a cylinder \citep[Eq.~6.1]{haller_modeling_2025},
      \begin{align*}
            \dot \rho
            &= 0.0584 \rho - 0.479 \rho^3 + 1.27 \rho^5 + 6.80 \rho^7 - 58.9 \rho^9 + 108 \rho^{11}
            \\
            \dot
            \gamma
            &= 0.553 + 0.441 \rho^2 - 3.38 \rho^4 + 55.5 \rho^6 - 321 \rho^8 + 626 \rho^{10}
      \end{align*}
      can be expressed using
      \(\mathcal H = \partial_t\),
      \(\mathcal G = \mathrm{id}\),
      \(y = (\rho, \gamma)\),
      and \(\phi(y) = (1, \rho, \rho^2, \ldots, \rho^{11})\).
\end{example}
\end{after}
% \begin{remark}
%       Even though this paper focuses on first-order models (first delay or first derivative),
%       our formalism includes higher-order models where \(\mathcal H\) is an
%       iterated left shift and \(\mathcal G\) contains lower-order shifts
%       (discrete-time), or \(\mathcal{H}\) is an iterated time derivative
%       and \(\mathcal{G}\) contains lower-order time derivatives
%       (continuous-time).
% \end{remark}
% \footnote{The \((\mathcal G, \mathcal H)\) formalism
%       also encompasses higher-order models where \(\mathcal H\) is an
%       iterated left shift and \(\mathcal G\) contains lower-order shifts
%       (discrete-time), or \(\mathcal{H}\) is an iterated time derivative
%       and \(\mathcal{G}\) contains lower-order time derivatives
%       (continuous-time).
% %       , and \cite[\S9]{kuang_debiasing_2025}
% %       generalizes to any linear operators that can be discretized using
% % Taylor approximation.
% }

%  and \(\left\|\epsilon_i\right\|_{q(1 + \epsilon^{-1})} = \tilde \nu_\epsilon\).

% \begin{before}
%       Find an estimator \(\hat\theta\) for \(\theta_0\) given
%       \(\{z_i\}_{i \in [1 \ldots n]}\).
%       \end{before}
\begin{problem}
      \label{prob:main}
      Given data
      \(\{z_i\}_{i \in [1 \ldots n]}\),
      find a plug-in estimator \(\hat\theta\) satisfying
      \begin{align*}
            \left\|\hat \theta - \theta_0\right\|_{q}
            &\leq \mathsf{error}(n, h, q)
      \end{align*}
      where \(\mathsf{error}(n, h, q)\) is a computable function satisfying \(\limsup_{h\to 0}\limsup_{n \to \infty} \mathsf{error}(n, h, q) = 0\) for all \(q\).
\end{problem}

% But a vanilla instrumental variables estimator is heavy-tailed, so some form of regularization is needed to get \(L^q\) consistency.

\begin{assum}
      \label{assum:phi-lip}
      The nonlinearity \(\phi\) is Lipschitz in its second argument, uniformly in its first:
      \begin{gather*}
            \sup_{t \in [0, T]}
            \sup_{x \in \mathbb R^{\mathsf d_{\mathcal G}}}
            \left\|
                  \nabla_x \phi(t, x)
            \right\|
            < \infty.
      \end{gather*}
\end{assum}
\begin{assum}
      \label{assum:signal-regularity}
      For some \(p \geq 2\), the signal \(y\) satisfies
      \begin{gather*}
            \sup_{t \in [0, T]}
            \left\|
                  \dpd[p]{y}{t}(t)
            \right\|
            < \infty.
      \end{gather*}
\end{assum}
\begin{assum}
      \label{assum:noise-subg}
      The noise \(\{\epsilon_i\}_{i \in [1 \ldots n]}\) has zero mean, is independent across \(i\), and is subgaussian:
      \begin{gather*}
            \sup_{i \in [1 \ldots n]}
            \left\|
                  \epsilon_i
            \right\|_{\psi_2}
            \leq K < \infty.
      \end{gather*}
      
\end{assum}

\section{Design of the estimator}
\label{sec:algorithm}
We use two different kinds of regularization:
\begin{definition}[Regularization operators]
      \label{def:singular-value-clipping}
      Express \(A \in \mathbb R^{n \times n}\) as \(A = \sum_{i=1}^n \sigma_i u_i v_i^\intercal\) with \(\sigma_i \ge 0\) and \(\{u_i\}\), \(\{v_i\}\) orthonormal. The singular-value clipping operator is \(\clip[\lambda]{A} = \sum_{i=1}^n \max(\lambda, \sigma_i) u_i v_i^\intercal\). For \(\mu > 0\), \(\rho_\mu: \mathbb{R}^n \to \mathbb R^n\) is \(\rho_\mu(x) = x/(1 + \left\|x\right\|/\mu)\).
\end{definition}
% This section elaborates the prerequisites, design degrees of freedom, and
% steps summarized in Algorithm~\ref{alg:algorithm}:
% \begin{algorithm2e}[h!]
% \caption{Algorithm to compute \(\hat{\theta}\)}
% \label{alg:algorithm}
% \SetAlgoLined
% \LinesNumbered
% \DontPrintSemicolon
% \SetKwInOut{Input}{Input}
% \SetKwInOut{Output}{Output}

% \Input {Observations \(\{z_i\}_{i=1}^n\) \newline
%     Regression times \(\{t_j\}_{j=1}^{n'}\) \newline
%     Linear filters: $\hat{\mathcal{H}}_{\mathsf A}, \hat{\mathcal{H}}_{\mathsf B}, \hat{\mathcal{G}}_{\mathsf A}, \hat{\mathcal{G}}_{\mathsf B}$ \\
%     Regularization parameter $\lambda > 0$
% }

% \Output{Estimated parameter \(\hat{\theta}\)}

% \(Z \gets (\{z_i\}_{i=1}^n)\)\;
% \(\hat H_\mathsf{A} \gets \hat{\mathcal{H}}_{\mathsf A} Z\) \tcp*{Estimate \(\mathcal{H}y(t)\) and \(\mathcal{G}y(t)\)}
% \(\hat H_\mathsf{B} \gets \hat{\mathcal{H}}_{\mathsf B} Z\)\;
% \(\hat G_\mathsf{A} \gets \hat{\mathcal{G}}_{\mathsf A} Z\)\;
% \(\hat G_\mathsf{B} \gets \hat{\mathcal{G}}_{\mathsf B} Z\)\;

% \For{\(j \gets 1\) \KwTo \(n'\)}{
%     \(\hat \Phi_{\mathsf{A}, j} \gets \phi(t_j, \hat{H}_{\mathsf{A}, j})\) \tcp*{Apply \(\phi\)}
%     \(\hat \Phi_{\mathsf{B}, j} \gets \phi(t_j, \hat{H}_{\mathsf{B}, j})\)\;
% }

% \Return{\(\frac{1}{2}\left[\sym\left(\hat\Phi_{\mathsf{A}}^\intercal \hat{\Phi}_{\mathsf{B}}\right)_+ + \lambda I\right]^{-1} \left(\hat{\Phi}_{\mathsf{A}}^\intercal \hat{H}_{\mathsf{B}} + \hat{\Phi}_{\mathsf{B}}^\intercal \hat{H}_{\mathsf{A}}\right)\)
% \label{line:algorithm-return-formula}}

% \end{algorithm2e}
The estimator has two hyperparameters \(\lambda, \mu > 0\).
We select regression times \(\{t_j\}_{j=1}^{n'}\) at which to impose \eqref{eq:model}, 
\begin{after}
      then approximate the \emph{continuous functions} \(\mathcal Hy: \mathbb R \to \mathbb R^{\mathsf d_{\mathcal H}}\) and \(\mathcal Gy: \mathbb R \to \mathbb R^{\mathsf d_{\mathcal G}}\) at these times using the filtered estimates \(\hat{\mathcal{H}}\), \(\hat{\mathcal{G}}\), and \(\tilde{\mathcal{G}}\) (described in Section~\ref{subsec:filters}).
\end{after}
Define the matrices \(Y \in\mathbb R^{n' \times \mathsf d_{\mathcal H}}\), \(X \in\mathbb R^{n' \times \mathsf d_{\phi}}\), and \(Z \in\mathbb R^{n' \times \mathsf d_{\phi}}\) by
\begin{align}
      Y_j &= \hat {\mathcal H} y(t_j),
      % \label{eq:Y-matrix}
      % \\
      &
      X_j &= \phi(t_j, \hat {\mathcal G} y(t_j)),
      % \label{eq:X-matrix}
      &
      Z_j &= \rho_\mu \circ \phi(t_j, \tilde {\mathcal G} y(t_j)),
      % \label{eq:Z-matrix}
\end{align}
where \(Y_j\), \(X_j\), and \(Z_j\) denote the \(j\)-th rows of the respective matrices.
Note that \(\tilde {\mathcal G} y(t_j)\) is stochastically independent of both \(\hat {\mathcal G} y(t_j)\) and \(\hat {\mathcal H} y(t_j)\), and each row of \(Z\) is bounded by \(\mu\) in absolute value.
The estimator is given by
\begin{align}
      \hat \theta
      &= \del{\clip {Z^\intercal X}}^{-1} Z^\intercal Y,
      % 
      % \intertext{Our estimator is based on imposing \eqref{eq:model} at each \(t_j\),}
      % \mathcal H(t_j) &= \theta_0^\intercal \phi(t_j, \mathcal G(t_j)),
      % \intertext{which implies that for any vector \(\xi\),}
      % \xi
      % \mathcal H(t_j)^\intercal
      %  &=
      %  \xi \phi(t_j, \mathcal G(t_j))^\intercal \theta_0.
\end{align}
and satisfies: 
\begin{corollary}
      \label{cor:consistency}
      Let \(X\), \(Y\), and \(Z\) come from the estimator \(\hat \theta\) defined in \S\ref{sec:algorithm}.
      Then \(\hat \theta\) satisfies the first two hypotheses of Theorem~\ref{thm:consistency}.
      If \(N\) has the ideal scaling with respect to \(h\), \(\sigma^2 \sim n\), \(\sup_i |y_i^\star| = \Theta(1)\), \(\sup_i |x_i^\star| = O(1)\),
      and 
      for all \(C\),
      \begin{align*}
            \exp\del{-C n^3 h^{2p/(2p + 1)}}
            \ll \lambda
            \ll n,
      \end{align*}
      then for any \(q \geq 1\),
      \begin{align*}
            \left\|\hat\theta - \theta_0\right\|_q
            &\lesssim
            h^{(p-d)/(2p + 1)} + \sqrt{
                  \frac{1}{
                        n h^{2p/(2p + 1)}
                  }
            }
      \end{align*}
      where \(d=1\) for the continuous-time estimator and \(d=0\) for the discrete-time estimator.
\end{corollary}
\begin{remark}[Sensitivity to tuning parameters]
The range of favorable \(\lambda = \lambda(n, h)\) as \(n\to \infty\), \(h \to 0\) is very wide; and at the \(\lambda \sim n\) extreme, the IV estimator behaves like least squares.
The role of \(\mu\) is more subtle.
On one hand, \(\mu\) multiplies the entire RHS of the error bound (hidden behind ``\(\lesssim\)''), so it would seem that smaller is better.
On the other hand, \(\mu\) also factors into the persistence of excitation condition \eqref{assumption:persistence}, and an excessively small \(\mu\) might make the condition fail.
Fortunately (in practice), the persistence of excitation condition is checkable in that \( \expect Z^\intercal X \approx Z^\intercal X \), which can be computed from the data.
\end{remark}

The proof of this result relies on Theorem~\ref{thm:consistency}, which has the form
\begin{align*}
      \text{\sffamily error} \lesssim 
      \del{\text{\sffamily sensitivity to }\lambda} \lambda
      + \del{\text{\sffamily sensitivity to data}} \del{\text{\sffamily bias} + \text{\sffamily noise}}.
\end{align*}
In order to motivate and satisfy the hypotheses of Theorem~\ref{thm:consistency}, we now narrate the differences between our estimator and the least-squares estimator \(\hat \theta_{\text{LS}} = (X^\intercal X)^{-1} X^\intercal Y\), which is obtained by replacing \(Z\) with \(X\), setting \(\lambda = 0\), and setting \(\mu = \infty\).

\paragraph{Why \(Z\)?}
\label{subsec:filters-why-Z}
As the upcoming example shows, the noise-noise interaction in \(X^\intercal X\) (and, to a lesser extent, in \(X^\intercal Y\)) is a source of bias.
Replacing \(X^\intercal X\) with \(Z^\intercal X\) and \(X^\intercal Y\) with \(Z^\intercal Y\) eliminates this source of bias; the columns of \(Z\) are called instrumental variables \citep[Chapter 8]{davidson_econometric_2004}.
We depict a one-dimensional miniature of the
quadratic terms in \(\hat \theta\) and \(\hat \theta_{\text{LS}}\):
\(\eta_1\) illustrates what is happening inside \(\hat\theta_\text{LS}\),
and \(\eta_2 \) illustrates what is happening inside \(\hat\theta\).
\begin{example}
      Suppose we are trying to estimate \(\eta = \frac{1}{n}\sum_{i =
      1}^n \mu_i^2 \in \mathbb{R}\) from the datasets \(\{X_{\mathsf
      A, i}\}_{i=1}^n\) and \(\{X_{\mathsf B, i}\}_{i=1}^n\)
      where for \(\beta \in \{\mathsf A, \mathsf B\}\) we have
      \(X_{\beta, i} = \mu_i + \epsilon_{\beta, i}\).
      The noise \(\epsilon_{\beta, i}\) has distribution
      \(\mathcal{N}(0, \sigma^2)\), and \(\epsilon_{\beta, i}\) is
      independent of \(\epsilon_{\beta', i'}\) if \(\beta \neq
      \beta'\) or \(i \neq i'\).
      Consider the two estimators
      \begin{align*}
            \hat \eta_1 &= \frac{1}{2n} \del{\sum_{i = 1}^{n}
            X_{\mathsf A, i}^2 + \sum_{i = 1}^{n} X_{\mathsf B, i}^2}
            &\text{and}&
            &
            \hat \eta_2 &= \frac{1}{n} \sum_{i = 1}^{n} X_{\mathsf A,
            i} X_{\mathsf B, i}
            \intertext{with means}
            \expect \hat \eta_1 &= \eta + \sigma^2
            &\text{and}&
            &
            \expect \hat \eta_2 &= \eta
            \intertext{and variances}
            \var \hat \eta_1 &= \var \hat\eta_2 = \frac{2\sigma^2
            \eta + \sigma^4}{n}.
      \end{align*}
      While these two estimators have identical variances, \(\hat
      \eta_2\) is unbiased and has a strictly smaller MSE than \(\hat \eta_1\).
\end{example}

This example shows that the independence structure of \(Z^\intercal X\) is key.
We achieve it in our problem via a novel sample-split design for time series with a latent continuous-time structure.

\paragraph{The filters \(\hat {\mathcal H}\), \(\hat {\mathcal G}\), and \(\tilde {\mathcal G}\).}
\label{subsec:filters}

To approximate \(\mathcal H\) and \(\mathcal G\) from noisy discrete data, we use local polynomial regression \citep{fan_local_2003,de_brabanter_derivative_2013}.
Each filter is specified by a \emph{differentiation stencil} \(\mathbf D^{k, i_0, N}_{d, h} \in \mathbb{R}^N\), which is a vector of coefficients that linearly combines \(N\) consecutive measurements to approximate the \(d\)-th derivative at location \(i_0 h\).

\begin{definition}[Differentiation stencil]
      \label{def:differentiation-stencil}
      Fix a window size \(N \in \mathbb{N}\), step size \(h > 0\), derivative order \(d \in \mathbb{N}\), and location \(i_0 \in \mathbb{R}\).
      The differentiation stencil \(\mathbf D^{k, i_0, N}_{d, h} \in \mathbb{R}^N\) satisfies
      \begin{align}
            \label{eq:stencil-exactness}
            \sum_{k=1}^N \mathbf D^{k, i_0, N}_{d, h} f(kh) &= \dod[d]{f}{x}(i_0 h)
            \quad \text{for all polynomials } f \text{ of degree at most } p-1,
      \end{align}
      where \(p \geq d+1\) is a design parameter.
      Among all stencils satisfying \eqref{eq:stencil-exactness}, we select the one with minimum Frobenius norm (see \appendixref{section:filtering-details} for construction).
\end{definition}

% In our two \begin{after}numerical examples,\end{after} the operators \(\mathcal H\) and \(\mathcal G\) are approximated using the following stencils:
% \begin{description} 
%       \item[discrete-time] \(\mathcal {H}\) uses \(\mathbf D^{k, (3+N)/2, N}_{0, h}\) and \(\mathcal G\) uses \(\mathbf D^{k, (1+N)/2, N}_{0, h}\).
%       \item[continuous-time] \(\mathcal {H}\) uses \(\mathbf D^{k, (1+N)/2, N}_{1, h}\) and \(\mathcal G\) uses \(\mathbf D^{k, (1+N)/2, N}_{0, h}\).
% \end{description}

\begin{lemma}[Local polynomial filtering]
      \label{lem:local-polynomial-filter}
      Let \(f \in C^p([0, T], \mathbb{R})\) be a univariate function with \(R_p := \sup_{t \in [0, T]} |f^{(p)}(t)| < \infty\), and let \(\{w_k\}_{k=1}^N\) be independent mean-zero noise with \(\max_k \|w_k\|_{\psi_2} \leq \nu\).
      Define the filtered estimate
      \begin{align*}
            \widehat{f^{(d)}}(i_0 h) := \sum_{k=1}^N \mathbf D^{k, i_0, N}_{d, h} \del{f(kh) + w_k}.
      \end{align*}
      Then the bias and fluctuation satisfy
      \begin{align*}
            \left| \expect \widehat{f^{(d)}}(i_0 h) - f^{(d)}(i_0 h) \right| &\leq C(p, i_0) R_p (Nh)^{p-d},
            \\
            \left\| \widehat{f^{(d)}}(i_0 h) - \expect \widehat{f^{(d)}}(i_0 h) \right\|_{\psi_2} &\leq C(p, i_0) \nu N^{-d - \frac{1}{2}} h^{-d}.
      \end{align*}
\end{lemma}
\begin{proof}
      This relies on Assumptions~\ref{assum:signal-regularity} and \ref{assum:noise-subg}.
      The construction and analysis are detailed in \appendixref{section:filtering-details}.
      The stencil coefficients are obtained by solving a minimum-norm problem subject to the exactness constraint \eqref{eq:stencil-exactness} (Proposition~\ref{prop:local-polynomial-filter}).
      The bias bound follows from Taylor expansion (Proposition~\ref{prop:local-polynomial-filter-bias}).
      The fluctuation bound uses subgaussian concentration (Proposition~\ref{prop:local-polynomial-filter-fluctuation}).
\end{proof}

We first ``unzip'' the time series into even and odd subsequences, then apply the local polynomial filter to each subsequence separately:

\begin{definition}[Sample-split filters]
      \label{def:sample-split-filters}
      Let \(\{z_i\}_{i=1}^n\) denote the measurements, and let \(z^{\text{even}} = \{z_{2j}\}_{j=1}^{\lfloor n/2 \rfloor}\) and \(z^{\text{odd}} = \{z_{2j-1}\}_{j=1}^{\lceil n/2 \rceil}\) denote the even- and odd-indexed subsequences.
      For a given operator \(\mathcal T\) with parameters \((i_0, d)\), define:
      \begin{align*}
            \hat{\mathcal T} y(t_j) &:= \sum_{k=1}^N \mathbf D^{k, i_0 - 1/4, N}_{d, 2h} z^{\text{even}}_k,
            &
            \tilde{\mathcal T} y(t_j) &:= \sum_{k=1}^N \mathbf D^{k, i_0 + 1/4, N}_{d, 2h} z^{\text{odd}}_k,
      \end{align*}
      where the stencils are evaluated at step size \(2h\) (twice the original sampling period) and shifted locations \(i_0 \pm 1/4\).
\end{definition}

Both \(\hat{\mathcal T}\) and \(\tilde{\mathcal T}\) approximate the same continuous function at the same off-grid location (corresponding to ``\(z_{i+0.5}\)''), but use disjoint subsets of the data; see Figure~\ref{fig:sample-split-filters} in \appendixref{section:sample-split-illustration} for a visualization that explains the quarter-step offset.

Because the noise terms \(\{\epsilon_i\}\) are independent across indices, we have:
\begin{lemma}
      \label{lem:filters-independence}
      The filters \(\hat{\mathcal H}\), \(\hat{\mathcal G}\), and \(\tilde{\mathcal G}\) obey the same bounds as in Lemma~\ref{lem:local-polynomial-filter}; moreover, \(\hat{\mathcal H}\) and \(\hat{\mathcal G}\) are independent of \(\tilde{\mathcal G}\).
\end{lemma}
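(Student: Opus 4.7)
The lemma has two components, and I would handle them in sequence.

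First, for the bound claims, I would observe that each sample-split filter is itself an instance of the construction of Lemma~\ref{lem:local-polynomial-filter}, applied to a length-\(N\) subsequence of observations. Concretely, the hat filters use the equispaced grid \(\{2ih\}\) (step \(2h\)) with evaluation point \(i_0 h - h/2\) (i.e.\ shifted index \(i_0 - 1/4\) in units of the doubled step), and the tilde filter uses \(\{(2i+1)h\}\) with evaluation point \(i_0 h + h/2\). Plugging step \(2h\) into Lemma~\ref{lem:local-polynomial-filter} yields bias \(O((N \cdot 2h)^{p-d}) = O((Nh)^{p-d})\) and centered subgaussian norm \(O(N^{-d-1/2}(2h)^{-d}) = O(N^{-d-1/2} h^{-d})\); the factors of \(2^{p-d}\) and \(2^{-d}\) are absorbed into the dimension-dependent constant \(C\). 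The subgaussian noise hypothesis still applies since each \(\epsilon_i\) inherits its sub-gaussian bound \(K\) from Assumption~\ref{assum:noise-subg}.

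Second, for the independence claim, I would write \(\hat{\mathcal H} y(t_j)\) and \(\hat{\mathcal G} y(t_j)\) as Borel functions of the even-indexed observations \(\{z_{2i}\}_{i}\), and \(\tilde{\mathcal G} y(t_j)\) as a Borel function of the odd-indexed observations \(\{z_{2i+1}\}_{i}\). Since \(y\) is deterministic by the problem statement, these are equivalently Borel functions of the even and odd noise subsequences \(\{\epsilon_{2i}\}\) and \(\{\epsilon_{2i+1}\}\). Assumption~\ref{assum:noise-subg} provides that \(\{\epsilon_i\}_{i \in [1\ldots n]}\) is a mutually independent family, so the even and odd subfamilies are independent. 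Independence is preserved under Borel maps, so \(\hat{\mathcal H}\) and \(\hat{\mathcal G}\) are jointly independent of \(\tilde{\mathcal G}\).

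The content is almost entirely bookkeeping; the only substantive point is the parameter substitution \(h \mapsto 2h\) in Lemma~\ref{lem:local-polynomial-filter}, and I would be careful to verify that the shifted evaluation points \(i_0 \mp 1/4\) still lie within the support of the length-\(N\) stencil so that the local polynomial regression of Lemma~\ref{lem:local-polynomial-filter} applies without modification. There is no real obstacle; I expect the entire proof to be a short paragraph once the notational setup from \S\ref{subsec:filters} is in hand.
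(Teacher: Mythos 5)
Your proof is correct and matches the approach implicit in the paper, which states the lemma immediately after describing the sample-split construction with only the informal remark that ``noises at different \(i\) are independent'' and gives no formal proof. Your fleshed-out version --- viewing each sample-split filter as an instance of Lemma~\ref{lem:local-polynomial-filter} with step size \(2h\) and shifted \(i_0\), noting that the factors of \(2^{p-d}\) and \(2^{-d}\) are absorbed into the constant, and observing that independence of the even and odd noise subsequences (Assumption~\ref{assum:noise-subg}) is preserved under Borel maps --- is exactly the reasoning the paper intends. One small note: Proposition~\ref{prop:local-polynomial-filter} already allows arbitrary \(i_0 \in \mathbb{R}\) (with the constant depending on \(i_0\)), so your caution about the shifted evaluation point lying within the stencil is not strictly necessary, though it does no harm.
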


\paragraph{Why \(\lambda\)?}
The vanilla IV estimator \((Z^\intercal X)^{-1} Z^\intercal Y\) is well-studied for its unbiasedness and asymptotic normality and efficiency under standard identifying assumptions \citep{gonzalez_consistency_2024}.
However, it has a heavy-tailed distribution and does not have any finite moments of any order \citep[\S8.4]{davidson_econometric_2004}.
We seek to capture a finite-sample analog of the traditional asymptotic consistency result.
By \begin{after}
      clipping the lower spectrum of \(Z^\intercal X\),
\end{after} we ensure that for all \(q\geq 1\), \(\left\|\hat \theta\right\|_q < \infty\).
It is furthermore helpful that singular value clipping is a bounded perturbation of the unclipped matrix:
\begin{lemma}
      \label{lemma:singular value perturbation}
      Let \(A\) be any matrix. Then
      \(\left\|A - \clip A\right\| \leq \lambda\).
\end{lemma}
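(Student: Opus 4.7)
The plan is to expand $A - \clip A$ directly in the SVD basis of $A$ and read off its singular values, exploiting the fact that the clipping operator in Definition~\ref{def:singular-value-clipping} acts diagonally on the spectrum while leaving $\{u_i\}$ and $\{v_i\}$ alone. No probabilistic input or structural lemma is required: the argument is a one-line spectral calculation.

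Concretely, subtracting the definitions of $A$ and $\clip A$ term by term yields
\[
A - \clip A \;=\; \sum_{i=1}^{n} \bigl(\sigma_i - \max(\lambda,\sigma_i)\bigr)\, u_i v_i^\intercal.
\]
For each $i$ the scalar coefficient equals $0$ when $\sigma_i \geq \lambda$ and equals $\sigma_i - \lambda$ when $\sigma_i < \lambda$; since $\sigma_i \geq 0$, in either case its absolute value is $\max(\lambda - \sigma_i, 0) \in [0,\lambda]$. Absorbing the sign into $u_i$ whenever the coefficient is negative turns the display above into a bona fide SVD of $A - \clip A$, whose singular values are the nonnegative numbers $\max(\lambda - \sigma_i,0)$, each bounded by $\lambda$. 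The operator $2$-norm is the largest singular value, so $\|A - \clip A\| \leq \lambda$.

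The only point that deserves a brief mention is that Definition~\ref{def:singular-value-clipping} is stated for square $A$, while the lemma claims ``any matrix''. This is purely cosmetic: for rectangular $A \in \mathbb{R}^{m \times n}$ one uses the thin SVD indexed by $i \in [1,\ldots,\min(m,n)]$, and the calculation above is unchanged. I do not anticipate an obstacle; the main task is just to write the three lines cleanly.
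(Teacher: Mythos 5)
Your proof is correct and follows the same route as the paper's one-line argument: both exploit that $A$ and $\clip A$ share singular vectors, so $A - \clip A$ has singular values $\max(\lambda - \sigma_i, 0) \le \lambda$. You simply spell out the spectral calculation and the rectangular-matrix caveat that the paper leaves implicit.
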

\begin{proof}
      Because \(A\) and \(\clip A\) have the same singular vectors, \(\left\|A - \clip A\right\|\) has singular values at most \(\lambda\).
\end{proof}

\paragraph{Why \(\mu\)?}
\label{subsec:filters-why-mu}
We have made \(\left\|\hat \theta\right\|_q\) finite by choosing \(\lambda > 0\).
We must now go further and ensure that \(\left\|\hat \theta - \theta_0\right\|_q\) is small;
in order to prove the IV estimator's consistency, we need a concentration inequality that bounds the deviations of \(Z^\intercal X\) from a nominal value.
If we set \(\mu =\infty\), then both \(Z\) and \(X\) would have sub-gaussian rows, and one could apply a Hanson-Wright inequality to get sub-exponential concentration of \(Z^\intercal X\) \citep{ziemann_tutorial_2023}.
However, it is easier to work with sub-gaussian concentration than sub-exponential concentration, especially when working in probabilistic \(L^p\) spaces.
Hence, by choosing \(\mu < \infty\), the rows of \(Z\) become bounded by construction, and we can get sub-gaussian concentration of \(Z^\intercal X\), and the following moment inequality for its regularized inverse:
% (Lemma~\ref{lemma:combined-bound}:)

\begin{proposition}[Three-way integral]
      \label{lem:three-way-integral}
      Let \(0 < a < b < \infty\) be constants.
      Let \(W>0\) be a real-valued random variable satisfying
      \(\probability\del{W \geq t} \leq \exp(-t^2/K^2)\) for some \(K > 0\).
      Then for all \(r \geq 1\), \(X\) defined by
      \begin{align*}
            X &= \frac{1}{a + 0 \vee (b - W)}
            \\
            \intertext{satisfies}
            \left\|X\right\|_r
            &\leq
            \gamma(r; a, b, K) = \gamma_{\text{head}}(r; a, b) + \gamma_{\text{body}}(r; a, b, K) + \gamma_{\text{tail}}(r; a, b, K),
            \intertext{where}
            \gamma_{\text{head}}(r; a, b)
            &= \frac{2}{b}
            \\
            \gamma_{\text{body}}(r; a, b, K)
            &=
            C\frac{r^{1/r}K^{1/r}}{b^{2(1+1/r)}}
            \exp\del{CK^2
                  \frac{
                        (r+ 1)^2\log^2 (a/b)
                  }{
                  r\del{b- a}^2}
            },
            \\
            \intertext{and}
            \gamma_{\text{tail}}(r; a, b, K)
            &=
            \frac{\exp(-C b^2/r K^2)}{a}
      \end{align*}
      \begin{after}
            for some absolute constant \(C > 0\).
      \end{after}
\end{proposition}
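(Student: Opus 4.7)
Since $X$ takes values in $[1/(a+b), 1/a]$ almost surely, the plan is to bound $\mathbb{E}[X^r]$ by a deterministic integral and then take the $r$-th root. Using the tail formula $\mathbb{E}[X^r] = \int_0^\infty r t^{r-1} \mathbb{P}(X \geq t)\, dt$ and the identity $\{X \geq t\} = \{W \geq a+b-1/t\}$ (valid for $t \in [1/(a+b), 1/a]$), together with the subgaussian tail bound on $W$ and the substitution $s = 1/t$, I would reduce the problem to
\begin{gather*}
      \mathbb{E}[X^r] \leq \frac{1}{(a+b)^r} + \int_a^{a+b} \frac{r}{s^{r+1}} \exp\!\del{-\frac{(a+b-s)^2}{K^2}}\, ds.
\end{gather*}

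\textbf{Three-way split.} The constant term contributes at most $1/(a+b) \leq 2/b$ after the $r$-th root, supplying $\gamma_{\text{head}}$. I would partition $[a, a+b] = [a, s_-] \cup [s_-, s_+] \cup [s_+, a+b]$ with $s_-$ close to $a$ and $s_+$ of order $b$. On $[a, s_-]$ the Gaussian factor is at most $\exp(-b^2/K^2)$ (up to constants), and the crude polynomial bound $s^{-(r+1)} \leq a^{-(r+1)}$ combined with the $r$-th root produces $\gamma_{\text{tail}} = \exp(-Cb^2/rK^2)/a$. On $[s_+, a+b]$ the Gaussian is at most $1$ and the polynomial integrates to at most $s_+^{-r}$, giving another $O(1/b)$ term which is absorbed into $\gamma_{\text{head}}$.

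\textbf{The body.} The middle region $[s_-, s_+]$ is the crux: neither factor alone gives a useful bound. I would bound the body integral by its maximum value times its length, locating the interior saddle $s^*$ of $-(r+1)\log s - (a+b-s)^2/K^2$ via the stationarity condition $s^*(a+b-s^*) = (r+1)K^2/2$. Substituting back into the integrand, the exponent at the saddle can be rewritten using $\log(a/b)$ as the natural log-scale coordinate on the interval, producing the scaling $CK^2(r+1)^2 \log^2(a/b) / (r(b-a)^2)$; the polynomial prefactor $r^{1/r}K^{1/r}/b^{2(1+1/r)}$ then comes from the width and the $r$-th root of $r/(s^*)^{r+1}$. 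The degenerate case where $(a+b)^2 \leq 2(r+1)K^2$ (no interior saddle, integrand monotone) must be treated separately, and is handled by a direct boundary bound that is dominated by $\gamma_{\text{head}}$ and $\gamma_{\text{tail}}$.

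\textbf{Main obstacle and wrap-up.} The principal difficulty is the body estimate: the precise $\log^2(a/b)/(b-a)^2$ form of the exponent arises only after carefully matching the multiplicative (log-scale) structure of the polynomial factor against the additive (linear-scale) structure of the Gaussian, and requires coordinating $s_-, s_+$ with the saddle location so that the two boundary estimates do not dominate. Once the three pieces are in place, I would apply the subadditivity $(u+v+w)^{1/r} \leq u^{1/r} + v^{1/r} + w^{1/r}$ for $r \geq 1$ (valid because $t \mapsto t^{1/r}$ is concave and vanishes at $0$) to conclude $\|X\|_r \leq \gamma_{\text{head}} + \gamma_{\text{body}} + \gamma_{\text{tail}}$.
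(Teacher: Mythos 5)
Your setup is sound and in fact dual to the paper's.  The tail formula $\expect X^r = \int_0^\infty r t^{r-1}\probability(X\ge t)\dif t$ with $s=1/t$ produces the same one-dimensional integral that the paper reaches by first decomposing $X\le X_1+X_2+X_3$ on the events $\{W<a\}$, $\{a\le W<b\}$, $\{W\ge b\}$ and then applying the Fundamental Theorem of Calculus plus Tonelli to $X_2^r$; the two routes differ only in whether one layer-cakes in $X$ or in $W$, and either is valid.  Your head and tail estimates, and your use of $(u+v+w)^{1/r}\le u^{1/r}+v^{1/r}+w^{1/r}$ for $r\ge 1$, are also correct.

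The gap is the body estimate, which is the crux of the result.  You propose to bound the middle integral by its maximum times its width, locating the interior saddle of $\psi(s)=-(r+1)\log s-(a+b-s)^2/K^2$ via $s^*(a+b-s^*)=(r+1)K^2/2$, and then assert that ``the exponent at the saddle can be rewritten using $\log(a/b)$ \ldots producing the scaling $CK^2(r+1)^2\log^2(a/b)/(r(b-a)^2)$.''  That last step is not a rewriting; the saddle value $\psi(s^*)=-(r+1)\log s^*-(a+b-s^*)^2/K^2$ does not naturally collapse to a $\log^2(a/b)/(b-a)^2$ expression, and you give no computation showing it does.  The $\log^2(a/b)/(b-a)^2$ form in the paper's $\gamma_{\text{body}}$ is produced by an entirely different device: on $[a,b]$ the paper replaces $\log g(s)$ with its secant (upper bound by convexity of $\log g$), whose slope is exactly $(r+1)\log(b/a)/(b-a)$, and then evaluates the resulting Gaussian integral $\int e^{-s^2/K^2 + cs}\dif s$ in closed form; the $\log^2(a/b)$ enters as $c^2K^2/4$ from completing the square.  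That is, the relevant ``saddle'' is not the saddle of $\psi$ but the maximizer $s_0=cK^2/2$ of the secant-linearized exponent, which is a different point, and the Gaussian width is what supplies the $K^{1/r}$ factor---not the width of your interval $[s_-,s_+]$.  Without the convexity linearization (or an equivalent mechanism), your Laplace bound will give an exponent in terms of $s^*$ and $K$ that does not match the stated form, and you would still owe the coordination of $s_-,s_+$ with the saddle, the case analysis for no-interior-saddle, and the reconciliation of the polynomial and Gaussian length scales, none of which are carried out.  To close the gap, replace the max-times-width step with the paper's secant bound on $\log g$ followed by the exact Gaussian integral identity.
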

\begin{proof}
      The full proof is given in \appendixref{sec:three-way-integral-proof}.
      The idea is to truncate \(X\) into three regions based on the support of \(W\): a ``head'' where \(W\) is small, a ``tail'' where \(W\) is very large, and (our innovation) a ``body'' where \(W\) is large but not too large.
      In this region, the value of \(X\) transitions from \(1/a\) to \(1/b\).
      We integrate over this region using a ``layer cake'' Tonelli rearrangement to get a Gaussian integral.
\end{proof}

% \begin{enumerate}
%       \item
% \end{enumerate}

% The objective of this analysis is to design an instrumental variables
% estimator for identifying parameter-affine continuous-time systems
% specified as
% \begin{align}
%       \label{eq:parameter-affine}
%       y_j &= \theta_0^\intercal \phi_j, \quad j \in [1\ldots n].
% \end{align}
% Both \(y_j\) and \(\phi_j\) have to be reconstructed from data, which
% means that they are measured with systematic and random error.
% Fortunately, they are each measured \emph{twice},\footnote{Using
%       sample splitting and overlapping filters, though one could also
% propose a physical implementation with two sensors.} which allows us
% to construct an instrumental variables estimator.

% \section{Consistency of \texorpdfstring{\(\hat\theta\)}{theta-hat}}
% \label{sec:consistency}
Finally, we state a technical result which bounds the \(L^p\) risk of \(\hat\theta = \clip[\lambda]{Z^\intercal X}^{-1} Z^\intercal Y\) under abstract assumptions on \(Z\), \(X\), and \(Y\).
\begin{theorem}
      \label{thm:consistency}
      Let \(\lambda > 0\), \(q \geq 2\), and \(\epsilon> 0\).
      Suppose that unobserved \(Y^\star\in \mathbb{R}^{n \times \mathsf d_y}\), \(X^\star\in \mathbb{R}^{n \times \mathsf d_x}\), and \(\theta^\star \in
      \mathbb{R}^{\mathsf d_x \times \mathsf d_y}\)
      satisfy
      \begin{align}
            \label{eq:regression-model}
            Y^\star &= X^\star \theta^\star.
      \end{align}
      Suppose we have data \((z_i, x_i, y_i)\) for \(i \in [1\ldots n]\), such that
      \begin{enumerate}
            \renewcommand{\theenumi}{\alph{enumi}}
            \item 
            \label{assum:data-bounds}
            The data bounds hold:
            \begin{align*}
                  \sup_{i \in [1\ldots n]}
                   \left\|
                        \expect z_i y_i^\intercal - z_i (y_i^\star)^\intercal
                  \right\|
                  &=
                  \bar \nu_{zy} < \infty ,
                  &
                   \sup_{i \in [1\ldots n]}
                  \left\|
                        \expect z_i y_i^\intercal - z_i y_i^\intercal
                  \right\|_{\psi_2}
                  &=
                  \tilde \nu_{zy} < \infty,
                  \\
                  \sup_{i \in [1\ldots n]}
                  \left\|
                        \expect z_i x_i^\intercal - z_i (x_i^\star)^\intercal
                  \right\|
                  &= \bar \nu_{zx} < \infty,
                  &
                  \sup_{i \in [1\ldots n]}
                  \left\|
                        \expect z_i x_i^\intercal - z_i x_i^\intercal
                  \right\|_{\psi_2}
                  &= \tilde \nu_{zx} < \infty.
            \end{align*}
            \item
            For \(i, i'\in [1\ldots n]\), if \(|i - i'| \geq N\),
            then \((x_i, y_i, z_i)\) is independent of \((x_{i'}, y_{i'}, z_{i'})\).
            \item
            The data satisfies the persistence of excitation condition:
                  \label{assumption:persistence}
                  \begin{align*}
                        \sigma_\text{min} \del{
                              \expect {Z}^\intercal X
                        }
                        \geq \sigma^2 > 0.
                  \end{align*}
      \end{enumerate}

      Then the estimator defined by
      \begin{align*}
            \hat \theta &= \del{
                  \clip[\lambda]{Z^\intercal X}
            }^{-1} \del{
                  Z^\intercal Y
            }
      \end{align*}
      satisfies
      \begin{multline*}
            \frac{\left\|\hat\theta - \theta^*\right\|_q}{\left\|\theta^\star\right\|}
            \leq 
            \gamma(q; \lambda, \sigma^2 - \lambda, \sqrt{nN} \tilde \nu_{zx}) \lambda
            % \\
            +
            C \sqrt{q(1+1/\epsilon)} \gamma(q(1+1/\epsilon); \lambda, \sigma^2 - \lambda, \sqrt{nN} \tilde \nu_{zx})
            \\\cdot
            \Bigg\{
            n \del{
                  \bar \nu_{zx}
                  + \left\|\theta^\star\right\|^{-1} \bar \nu_{zy}
            }
            + \sqrt{nN} \del{
                  \tilde \nu_{zx}
                  + \left\|\theta^\star\right\|^{-1}
                  \tilde \nu_{zy}
            }
            \Bigg\}.
      \end{multline*}
      where \(\gamma\) is the function appearing in Proposition~\ref{lem:three-way-integral}.

\end{theorem}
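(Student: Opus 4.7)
The plan is to decompose $\hat\theta-\theta^\star$ into a regularization piece, a bias piece, and a stochastic fluctuation piece, then take $L^q$ norms using H\"older's inequality, invoking Proposition~\ref{lem:three-way-integral} on the random factor $\|A^{-1}\|$ and subgaussian moment bounds on the fluctuation. Concretely, let $A = \clip[\lambda]{Z^\intercal X}$ and use $Y^\star = X^\star\theta^\star$ to write
\[
\hat\theta-\theta^\star = A^{-1}\bigl[Z^\intercal(Y-Y^\star) + Z^\intercal(X^\star-X)\theta^\star + (Z^\intercal X - A)\theta^\star\bigr].
\]
By Lemma~\ref{lemma:singular value perturbation} the final summand has operator norm at most $\lambda\|\theta^\star\|$, which, once $\|A^{-1}\|$ is pulled out and an $L^q$ norm is taken, produces the leading $\gamma(q;\cdot)\lambda$ term of the theorem. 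Each of the first two summands I would split further into its deterministic mean (bias) and a mean-zero residual (fluctuation).

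Next I would bound the numerator. Summing the row-wise bounds from assumption~(\ref{assum:data-bounds}) over the $n$ rows gives a deterministic bias of operator-norm order $n(\bar\nu_{zx}\|\theta^\star\|+\bar\nu_{zy})$. For the fluctuation, assumption~(b) lets me partition $[1\ldots n]$ into $N$ interleaved subsequences of mutually independent increments; subadditivity of $\|\cdot\|_{\psi_2}$ within a subsequence, combined with a triangle inequality across the $N$ subsequences, yields a sub-gaussian norm of order $\sqrt{nN}(\tilde\nu_{zx}\|\theta^\star\|+\tilde\nu_{zy})$. The identical block argument bounds $W := \|Z^\intercal X - \expect Z^\intercal X\|$ by a sub-gaussian random variable of scale $K \asymp \sqrt{nN}\tilde\nu_{zx}$.

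For the denominator factor I would combine $\sigma_{\min}(A) = \max(\lambda,\sigma_{\min}(Z^\intercal X))$ with persistence of excitation to get $\sigma_{\min}(A) \geq \lambda + 0\vee(\sigma^2-\lambda-W)$; Proposition~\ref{lem:three-way-integral} applied with $a=\lambda$, $b=\sigma^2-\lambda$, $K=\sqrt{nN}\tilde\nu_{zx}$ then gives $\|A^{-1}\|_r \leq \gamma(r;\lambda,\sigma^2-\lambda,\sqrt{nN}\tilde\nu_{zx})$ for every $r\geq 1$. Finally, H\"older's inequality $\|UV\|_q \leq \|U\|_{q(1+1/\epsilon)}\|V\|_{q(1+\epsilon)}$ applied to $\|A^{-1}\|$ against the numerator, together with the conversion $\|V\|_r \lesssim \sqrt r\,\|V\|_{\psi_2}$ on the fluctuation factor and a majorization of the (deterministic) bias contribution by the same common constant, produces the stated two-term inequality.

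The chief obstacle is the sub-gaussian concentration of $W$ at the advertised scale $\sqrt{nN}\tilde\nu_{zx}$: the $\rho_\mu$ truncation from \S\ref{subsec:filters-why-mu} is essential here, because it forces each row of $Z$ to be bounded so that the increments $z_i x_i^\intercal$ are genuinely sub-gaussian rather than sub-exponential products of two sub-gaussians; without it a Hanson--Wright-type estimate would only give sub-exponential concentration of the matrix sum, and $\|A^{-1}\|$ would fail to be integrable at arbitrarily large $q$. A secondary subtlety is that $\gamma(r;a,b,K)$ is non-monotone in $r$, so the two exponents $q$ and $q(1+1/\epsilon)$ at which $\gamma$ is evaluated must be matched carefully to the regularization and data parts when collecting the final bound.
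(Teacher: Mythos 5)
Your proposal is correct and follows essentially the same route as the paper: the same error decomposition (regularization, bias, fluctuation), the same block partition of $[1\ldots n]$ into $N$ interleaved independent subsequences (Proposition~\ref{prop:local-dependence-q}), the same reduction $\sigma_{\min}(S) \ge \lambda + 0\vee(\sigma^2-\lambda-W)$ fed into Proposition~\ref{lem:three-way-integral} with $a=\lambda$, $b=\sigma^2-\lambda$, $K\asymp\sqrt{nN}\tilde\nu_{zx}$, and the same H\"older split between $\|S^{-1}\|_{q(1+1/\epsilon)}$ and the $L^{q(1+\epsilon)}$ norm of the stochastic numerator. Your identification of $b=\sigma^2-\lambda$ matches the theorem statement (and in fact silently corrects the paper's Lemma~\ref{lemma:combined-bound}, which states $b=\sigma^2$), and your remarks on why the $\rho_\mu$ truncation is needed to avoid sub-exponential tails track the paper's own discussion in \S\ref{subsec:filters-why-mu}.
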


% \subsection{}
% Next, we apply this error decomposition to the filtered \(X\), \(Y\), and \(Z\) constructed in .
% {lem:filters-independence}
% \input{iv-sections-l4dc/non-example.tex}

% \input{iv-sections-l4dc/iv-concrete.tex}
\section{Numerical examples}
\label{sec:lorenz}
Our data comes from the Lorenz system
with sinusoidal forcing:
\begin{align*}
      \dot x^1 &= \sigma(x^2 - x^1),
      &
      \dot x^2 &= x^1(\rho- x^3) - x^2,
      &
      \dot x^3 &= u(t) x^1x^2- \beta x^3,
      &
      u(t) &= \sin(2\pi f t)
\end{align*}
which we write as
\begin{align}
      \dot \xi &= \theta_0^\intercal \phi(t, \xi)
      \label{eq:lorenz}
\end{align}
where
\begin{align*}
      \xi &= (x^1, x^2, x^3)
      &
      \text{and}
      &&
      \phi(t, x^1, x^2, x^3)
      &=
      (\sin (2\pi f t), x^1, x^2, x^3, x^1x^2, x^1x^3).
\end{align*}
Since our contribution focuses on the regression step, rather than
(regularized or sequential) sparse model selection, we regard the entire matrix \(\theta_0 \in \mathbb{R}^{6 \times 3}\) as unknown.
See Appendix~\ref{section:lorenz-supplement} for details on the data, estimator, and reporting.

\subsection{Continuous-time}
\label{subsec:lorenz-continuous}
Taking \(\mathcal H = \pd{}{t}\) and \(\mathcal G = \operatorname{1}\) recovers the true data-generating process \eqref{eq:lorenz}.
We compare a sample-split IV estimator of our design to a least-squares estimator, taking \(\theta_0\) as ground truth.

Our estimator achieves a \(\sim\)200x reduction in bias and \(\sim\)2x reduction in \(L^2\) risk versus least squares (Table~\ref{fig:lorenz-continuous-error}); its sampling distribution appears nearly unbiased, while least squares is biased toward zero with less dispersion (Figure~\ref{fig:lorenz-continuous-distribution}, Appendix~\ref{section:lorenz-supplement}).

\begin{table}
      \centering
      \begin{tabular}{lrrr}
\toprule
Estimator & abs. bias (\%) & std (\%) & rmse (\%) \\
\midrule
Instrumental Variables (ours) & \num{0.017 \pm 0.008} & \num{0.800 \pm 0.007} & \num{0.800 \pm 0.007} \\
Least Squares & \num{2.382 \pm 0.003} & \num{0.517 \pm 0.005} & \num{2.437 \pm 0.003} \\
\bottomrule
\end{tabular}

      \caption{\label{fig:lorenz-continuous-error}%
      Comparison of bias, standard deviation, and root mean square error of our estimator and a least squares estimator for the parameter of the continuous-time Lorenz system.
      Monte Carlo standard errors in parentheses.
      }
\end{table}

\subsection{Discrete-time}
Taking \(\mathcal H \) to be a left shift by \(h\) and \(\mathcal G = \operatorname{1}\) produces a first-order discretization of \eqref{eq:lorenz}.
We compare a sample-split IV estimator of our design to a least-squares estimator.
There is no ground truth in this case, so we compare against a pseudo-true value obtained by evaluating the least-squares estimator on noise-free data.

Our estimator achieves a \(\sim\)500x reduction in bias and \(\sim\)10x reduction in \(L^2\) risk versus least squares (Table~\ref{fig:lorenz-discrete-error}); the sampling distribution is again virtually unbiased, while least squares is biased toward zero (Figure~\ref{fig:lorenz-discrete-distribution}, Appendix~\ref{section:lorenz-supplement}).

\begin{table}
      \centering
      \begin{tabular}{lrrr}
\toprule
Estimator & abs. bias (\%) & std (\%) & rmse (\%) \\
\midrule
Instrumental Variables (ours) & \num{0.00318 \pm 0.00148} & \num{0.14431 \pm 0.00126} & \num{0.14434 \pm 0.00126} \\
Least Squares & \num{1.51112 \pm 0.00059} & \num{0.07690 \pm 0.00073} & \num{1.51307 \pm 0.00059} \\
\bottomrule
\end{tabular}

      \caption{\label{fig:lorenz-discrete-error}%
      Comparison of bias, standard deviation, and root mean square error (normalized by the pseudo-true value) of our estimator and a least squares estimator for the parameter of the discrete-time Lorenz system.
      Monte Carlo standard errors in parentheses.
      }
\end{table}

\begin{after}
\subsection{Van der Pol oscillator}
\label{subsec:vanderpol}
Unlike the Lorenz examples, the Van der Pol oscillator exercises the case \(\mathcal G \neq \operatorname{1}\).
The scalar observation \(x(t)\) satisfies the second-order ODE
\begin{align*}
      \ddot x &= \mu(1 - x^2)\dot x - x,
\end{align*}
which we write as
\begin{align}
      \mathcal H\, x(t) &= \theta_0^\intercal \phi(\mathcal G\, x(t))
      \label{eq:vanderpol}
\end{align}
where \(\mathcal H = \partial_t^2\), \(\mathcal G = \del{\partial_t^0, \partial_t^1}\),
\begin{align*}
      \phi(x, \dot x) &= (x,\; \dot x,\; x^2 \dot x),
      &
      \theta_0 &= (-1,\; \mu,\; -\mu)^\intercal.
\end{align*}
Both \(\mathcal H\) and \(\mathcal G\) require derivative estimation from the noisy scalar data \(\{z_i\}\).
We compare a sample-split IV estimator (with \(\mu = 2\)) to a least-squares estimator, taking \(\theta_0\) as the ground truth.
See Appendix~\ref{section:vanderpol-supplement} for details on the data, estimator, and reporting.

Our estimator achieves a substantial reduction in bias and \(L^2\) risk versus least squares (Table~\ref{tab:vanderpol-continuous-error}, Appendix~\ref{section:vanderpol-supplement}); its sampling distribution appears nearly unbiased, while least squares is biased toward zero (Figure~\ref{fig:vanderpol-continuous-distribution}).
\end{after}

\begin{figure*}
      \includegraphics[width=1\linewidth]{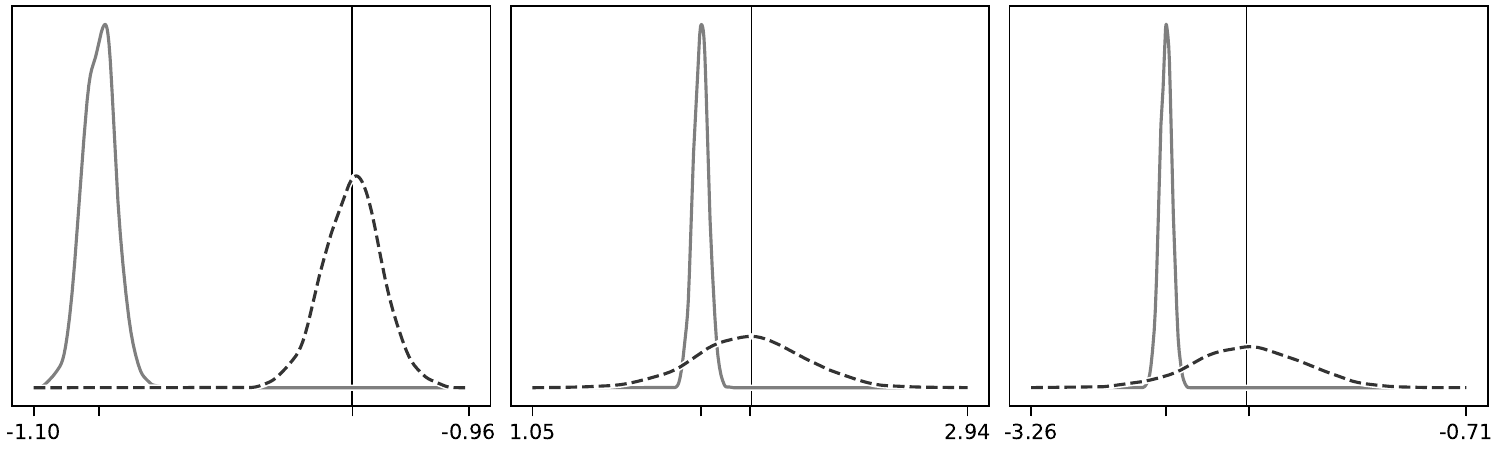}
      \caption{
            \label{fig:vanderpol-continuous-distribution}
            Elementwise marginal kernel density estimates of the sampling distributions
            of our estimator (dashed) and a baseline estimator (solid) for the Van der Pol parameter.
            Vertical line indicates ground truth; ticks indicate mean of sampling distribution.
      }
\end{figure*}

\section{Novelty}
We develop a form of instrumental variables estimation for system identification problems where there are no obvious instruments.
The instruments are based on a novel application of local polynomial regression to smooth or differentiate a function onto an off-grid point.

We then show that this estimator is consistent in \(L^p\), whereas the vanilla IV estimator does not even have an expectation.

\section{Significance}
Many problems in driven engineering could benefit from the bias reduction of instrumental variables estimation.
The only shortfall is that (especially in nonlinear models) there are no instruments.
Our filtering constructions are applicable to any method that models time evolution linearly in the parameters \citep{kutz_dynamic_2016,brunton_discovering_2016,mezic_koopman_2021,haller_modeling_2025}.

The \(L^p\) consistency of our estimator appeals to a trend towards non-asymptotic analyses of point estimators traditionally understood via asymptotic normality, such as system identification with noiseless data and random designs \citep{ziemann_tutorial_2023,bakshi_new_2023}.
Broadly speaking, one hopes to attain the classical \(\sqrt{n}\)-asymptotic rate of convergence, but with non-asymptotic constants.
Whereas a large body of work focuses on bounding the quantiles of the estimation error, our work achieves this goal (with a nonparametric power-of-\(h\) caveat) by bounding the \(L^p\) risk for a fixed-design, random noise setting.
A technique that may be of independent interest is the layer cake technique for bounding regularized matrix inverses (Proposition~\ref{lem:three-way-integral}), and the \(\mu\)-truncation for thinning the tails of a design matrix from subexponential to subgaussian without incurring any bias
(an elementary workaround that could obviate more complex Hanson-Wright inequalities).
% It remains to be seen whether \(\mu\) can be
% , but raises the question of what can be done to make \(\mu\) adaptive to the data.

% Our work shows 
% Even though our fixed-design, random noise framing is different from the random designs 
% We propose a flexible family of IV estimators for nonlinear autoregressive systems based on the idea of linearly filtering a time series into regressors and instrumental variables.
% We give examples to illustrate the usefulness of system identification estimators designed using the IV framework verified in Thm~\ref{thm:consistency}.
% Even though they are only slightly more complex to implement than least squares, they achieve drastic reduction in bias and estimation RMSE and are recommended for the practioner of SINDy \cite{brunton_discovering_2016} and Koopman approximation \cite{mezic_koopman_2021} for dynamics learning.

% \begin{ack}                               % Place acknowledgements
% This work was supported by the National Science Foundation
% CAREER Program (Grant No.~2046292).  % here.
% \end{ack}

% \bibliographystyle{apalike}
\clearpage
\acks{This work was supported by the National Science Foundation CAREER Program (Grant No. 2046292).}

\bibliography{export.bib}

\clearpage
\tableofcontents
% \listoffigures
% \listoftables
\appendix
\clearpage
\section{Sample split illustration}
\label{section:sample-split-illustration}
\begin{figure}[ht]
      \centering
      \includegraphics[width=0.9\textwidth]{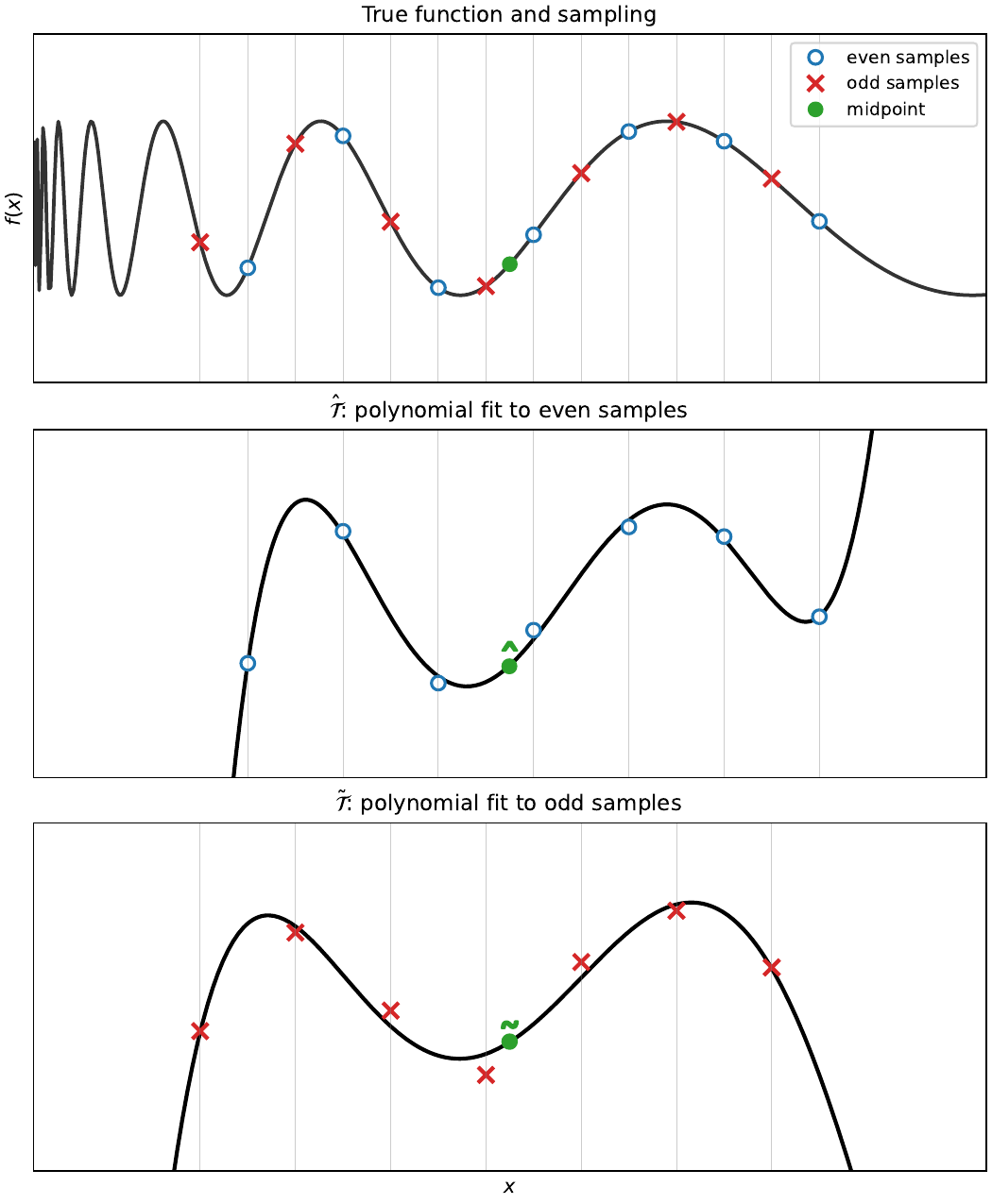}
      \caption{The top panel shows the true function with even (blue circles) and odd (red crosses) samples. The middle panel shows \(\hat{\mathcal T}\) constructed from even samples, and the bottom panel shows \(\tilde{\mathcal T}\) constructed from odd samples. Both filters interpolate to the same point ``\(\bullet\)'' using disjoint data subsets, resulting in estimates \(\hat{\bullet}\) and \(\tilde{\bullet}\) respectively. Note that from the points of views of \(\hat {\mathcal T}\) and \(\tilde {\mathcal T}\), the evaluation point is one-quarter step off center.}
      \label{fig:sample-split-filters}
\end{figure}

\FloatBarrier
\section{Proof of Proposition~\ref{lem:three-way-integral}}
\label{sec:three-way-integral-proof}
\begin{proof}
      Observe that \(a + 0 \vee (b - W)\) enjoys the following upper bound
      according to three cases of \(W\):
      \begin{align}
            a + 0 \vee (b - W)
            &\geq
            \begin{cases}
                  b, & \phantom{\lambda \leq{}} W \leq a \\
                  a + b - W, & a \leq W < b \\
                  a, & \phantom{\lambda \leq{}} b < W
            \end{cases}
      \end{align}
      These three cases yield an additive decomposition of \(X\) as
      \begin{gather}
            X \leq
            X_1 + X_2 + X_3,
            \quad \text{where}
      \end{gather}
      \begin{align}
            X_1 &= \frac{\bm{1}_{W \geq a}}{b},
            &
            X_2 &= \frac{\bm{1}_{a \leq W < b}}{a + b - W},
            &\text{and}
            &&
            % \\
            X_3 &= \frac{\bm{1}_{W \geq b}}{a}.
      \end{align}
      \begin{align}
            \intertext{By the triangle inequality of \(L^r\) norms, we have}
            \left\|X\right\|_r
            &\leq
            \left\|X_1\right\|_r
            + \left\|X_2\right\|_r
            + \left\|X_3\right\|_r
            \label{eq:X-1-2-3}
            \\
            \intertext{By taking expectations, we immediately obtain the bounds}
            \left\|X_1\right\|_r
            &\leq \frac{1}{b}
            \label{eq:upper-bound-x1}
            \quad\text{and}
            \\
            \left\|X_3\right\|_r
            &\leq
            \frac{\probability\del{W \geq b}^{r}}{a}
            \leq \frac{\exp(-b^2/rK^2)}{a}.
            \label{eq:upper-bound-x3}
      \end{align}
      To bound \(\expect \left\|X_2\right\|^r\), we use the
      Fundamental Theorem of Calculus:
      \begin{align}
            X_2^r &= X_{2, 1}^r + X_{2,2}^r,
            \intertext{where}
            X_{2, 1}^r &=  \left. \frac{1}{\del{a + b - s}^r}
            \right|_{s = a} = \frac{1}{b}
            \label{eq:X-21}
            \\
            \intertext{results in}
            \left\|X_{2, 1}\right\|_r &\leq \frac{1}{b}
            \\
            \intertext{and}
            X_{2,2}^r
            &=
            \bm{1}_{a \leq W < b} \left. \frac{1}{\del{a + b - s}^r}
            \right|_{s = a}^{s = W}
            \\
            &=
            \bm{1}_{a \leq W < b}
            \int_{a}^{W}
            \dod{}{s}\sbr{
                  \frac{1}{
                        \del{a + b - s}^r
                  }
            }
            \dif s
            \\
            \intertext{Taking expectations,}
            \expect X_{2,2}^r
            &=
            \expect \bm{1}_{a \leq W < b}
            \int_{a}^{W}
            \frac{r}{
                  \del{a + b - s}^{r + 1}
            }
            \dif s
            \\
            &=
            \expect \int_{a}^{b}
            \frac{r \bm{1}_{W \geq s}}{
                  \del{a + b - s}^{r + 1}
            }
            \dif s
            \\
            &=
            \int_{a}^{b}
            \frac{r \probability\del{W \geq s}}{
                  \del{a + b - s}^{r + 1}
            }
            \dif s
            \tag{Tonelli}
            \\
            &=
            r \int_{a}^{b}
            \underbrace{
                  \frac{1}{
                        \del{a + b - s}^{r + 1}
                  }
            }_{=: g(s)}
            e^{-s^2/K^2}
            \dif s
            \tag{Lemma~\ref{lem:concentration}}
            \\
            &=
            r \int_{a}^{b}
            e^{-s^2/K^2 + \log g(s)}
            \dif s
            \\
                  &\leq
                  r \int_{a}^{b}
                  \exp\del{-\frac{s^2}{K^2}
                        + \log(b^{-2(r+1)})
                        +
                        \frac{\log\sbr{(a/b)^{-(r+1)}}}{b
                        - a} s}
                  \dif s
            \tag{by convexity of \(\log g(s)\) on \([a, b]\)}
            \\
            &\leq
            \frac{r}{b^{2(r+1)}}
            \int_{-\infty}^{\infty}
            \exp\del{
                  -\frac{s^2}{K^2}
                  +
                  \frac{
                        \log\sbr{(a/b)^{-(r+1)}}
                  }{
                  b- a}
                  s
            }
            \dif s
            \\
            &=
            C\frac{rK}{b^{2(r+1)}}
            \exp\del{CK^2
                  \frac{
                        (r+ 1)^2\log^2 (a/b)
                  }{
                  \del{b- a}^2}
            }
            \intertext{by the Gaussian integral identity
             \(
             \int_{-\infty}^{\infty}
             e^{- (a x^2 + b x)}
             \dif x
             =\sqrt{\frac{\pi}{a}} e^{\frac{b^2}{4a}}
             \).
            Raising both sides to the power \(1/r\),}
            \left\|X_{2,2}\right\|_r
            &\leq
            C\frac{r^{1/r}K^{1/r}}{b^{2(1+1/r)}}
            \exp\del{CK^2
                  \frac{
                        (r+ 1)^2\log^2 (a/b)
                  }{
                  r\del{b- a}^2}
            }
            \label{eq:X-22}
      \end{align}
      by a Gaussian integral.
      We conclude a bound on \(\|X_{2,2}\|_r\) by raising both sides
      to the power \(\frac{1}{r}\).

      We re-associate the summands in \eqref{eq:X-1-2-3},
      \begin{align}
            \left\|X\right\|_r
            &\leq \underbrace{
                  \left\|X_1\right\|_r + \left\|X_{2,1}\right\|_r
            }_{:= \gamma_\text{head} }
            + \underbrace{\left\|X_{2,2}\right\|_r}_{\gamma_\text{body}}
            + \underbrace{\left\|X_3\right\|_r}_{\gamma_\text{tail}}.
      \end{align}
      and conclude by inserting \eqref{eq:upper-bound-x1} for \(\left\|X_1\right\|_r\), \eqref{eq:X-21} for \(\left\|X_{2,1}\right\|_r\), \eqref{eq:X-22} for \(\left\|X_{2,2}\right\|_r\), and \eqref{eq:upper-bound-x3} for \(\left\|X_3\right\|_r\).

\end{proof}

\section{Proof of Theorem~\ref{thm:consistency}}
\label{sec:consistency-proof}
Manipulating \eqref{eq:regression-model}
yields the identity
\begin{align}
      Z^\intercal Y
      &= 
      Z^\intercal Y^\star
      + \sbr{
            Z^\intercal Y - Z^\intercal Y^\star
      }
      \\
      &= 
      (Z^\star)^\intercal X^\star \theta^*
      + \sbr{
            Z^\intercal Y - Z^\intercal Y^\star
      }
      \\
      \begin{split}
      &= 
      \clip[\lambda]{Z^\intercal X} \theta^*
      + \sbr{
            Z^\intercal X
            - \clip[\lambda]{Z^\intercal X}
      } \theta^\star
      \\
      &\quad
      +
      \sbr{
            (Z^\star)^\intercal X^\star
            - Z^\intercal X
      } \theta^\star
      + \sbr{
            Z^\intercal Y - Z^\intercal Y^\star
      }
      \end{split}
\end{align}
Multiplying both sides by the inverse of \(S = \clip[\lambda]{Z^\intercal X}\) and inserting the definition of \(\hat\theta\), we have the decomposition
\begin{gather}
      \hat\theta - \theta^*
      = 
      S^{-1}
      \Bigg\{\sbr{
            Z^\intercal X
            - \clip[\lambda]{Z^\intercal X}
      } \theta^\star
      % \\
      +
      \sbr{
            Z^\intercal X^\star
            - Z^\intercal X
      } \theta^\star
      + \sbr{
            Z^\intercal Y - Z^\intercal Y^\star
      }
      \Bigg\},
\end{gather}
which may be combined with the Hölder conjugacy
\begin{align}
      \frac{1}{q} &= \frac{1}{q(1+ \epsilon^{-1})} + \frac{1}{q(1 + \epsilon)},
\end{align}
to yield
\begin{multline}
      \frac{\left\|\hat\theta - \theta^*\right\|_q}{\left\|\theta^\star\right\|}
      \leq 
      \underbrace{
            \left\|S^{-1}\right\|_{q}
      }_{\text{Lemma~\ref{lemma:combined-bound}}}
      \underbrace{
            \left\|
                  Z^\intercal X
                  - \clip[\lambda]{Z^\intercal X}
            \right\|_{\infty}
      }_{\text{Lemma~\ref{lemma:singular value perturbation}}}
      \\
      +
      \underbrace{
            \left\|S^{-1}\right\|_{q}
      }_{\text{Lemma~\ref{lemma:combined-bound}}}
      \Bigg\{
      \underbrace{
            \left\|
                  Z^\intercal X^\star
                  - Z^\intercal X
            \right\|_{q (1 + \epsilon)}
      }_{\text{Lemma~\ref{lem:concentration} and Fact~\ref{fact:subgaussian-tail}}}
      % \\
      +
      \left\|\theta^\star\right\|^{-1}
      \underbrace{
            \left\|
                  Z^\intercal Y - Z^\intercal Y^\star
            \right\|_{q (1 + \epsilon)}
      }_{\text{Lemma~\ref{lem:concentration} and Fact \ref{fact:subgaussian-tail}}}      
      \Bigg\}
\end{multline}
To finish the proof,
we use Lemma~\ref{lemma:combined-bound} to bound the prefactor as a function of \(\sigma^2\), \(\lambda\), and \(q(1+1/\epsilon)\):
\begin{multline}
      \gamma(s; \sigma^2, \lambda) =
      \frac{2}{\sigma^{2} - \lambda}
      +
      \frac{s^{1/s} CL_{ZX}}{
            \del{\sigma^{2} - \lambda}^{2(1 + 1/s)}
      }
      \exp\del{CL_{ZX}^2
            \frac{
                  (s+ 1)^2\log^2 (\lambda/(\sigma^2-\lambda))
            }{s
            \del{\sigma^2- 2\lambda}^2}
      }
      \\
      +
      \frac{\exp(-C(\sigma^2 - \lambda)^2/s L_{ZX}^2)}{\lambda}.
\end{multline}
We invoke Lemma~\ref{lemma:singular value perturbation} to bound \(Z^\intercal X - \clip[\lambda]{Z^\intercal X}\) almost surely.
We invoke Lemma~\ref{lem:concentration} to bound the subgaussian norms of \((Z^\star)^\intercal X^\star
                  - Z^\intercal X\) and \(Z^\intercal Y - (Z^\star)^\intercal Y^\star\), and then use Fact~\ref{fact:subgaussian-tail} to convert these into \(L^p\) norms.
The result is
\begin{multline}
      \frac{\left\|\hat\theta - \theta^*\right\|_q}{\left\|\theta^\star\right\|}
      \leq 
      \gamma(q; \sigma^2, \lambda) \lambda
      % \\
      +
      C \sqrt{q(1+1/\epsilon)} \gamma(q(1+1/\epsilon); \sigma^2, \lambda)
      \\\cdot
      \Bigg\{
      n \del{
            \bar \nu_{zx}
            + \left\|\theta^\star\right\|^{-1} \bar \nu_{zy}
      }
      + \sqrt{nN} \del{
            \tilde \nu_{zx}
            + \left\|\theta^\star\right\|^{-1}
            \tilde \nu_{zy}
      }
      \Bigg\}
\end{multline}

\subsection{Bounds on \texorpdfstring{\(N\)}{N}-dependent sums}
Before entering the proof of Theorem~\ref{thm:consistency}, we first recall some basic facts about (non-isotropic) vector- and matrix-valued subgaussian random variables; these are trivially adapted from the facts found in \cite[Chapter 2]{vershynin_high-dimensional_2018}.

\begin{fact}
      \label{fact:subgaussian-tail}
      Let \(X\) be a vector- or matrix-valued random variables with \(\left\|X\right\|_{\psi_2} = K\).
      Then
      \begin{enumerate}
            \renewcommand{\labelenumi}{\roman{enumi}.}
            \item
            The moments of \(X\) satisfy
            \begin{align*}
                  \left\|X\right\|_p
                  &\leq C K \sqrt{p}.
            \end{align*}

            \item The tails of \(X\) satisfy
                  \begin{gather*}
                  \probability \del{
                        \left\|X\right\| \geq t
                  }
                  \leq \exp \del{
                        -\frac{t^2}{CK^2}
                  }.
            \end{gather*}
      \end{enumerate}
\end{fact}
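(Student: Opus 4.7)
The plan is to derive both parts directly from the definition of the sub-gaussian norm, which says that $K = \|X\|_{\psi_2}$ is the smallest $t > 0$ such that $\expect \exp(\|X\|^2/t^2) \le 2$. Part (ii) is the standard Markov/Chernoff conversion from a moment generating function bound to a tail bound, and part (i) then follows from (ii) via the layer-cake formula for $\expect \|X\|^p$.

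For (ii), I would apply Markov's inequality to the random variable $\exp(\|X\|^2/K^2)$, which by definition of $K$ has expectation at most $2$. This gives
\begin{align*}
      \probability(\|X\| \geq t)
      &= \probability\del{\exp(\|X\|^2/K^2) \geq \exp(t^2/K^2)}
      \leq 2 \exp(-t^2/K^2),
\end{align*}
and absorbing the factor of $2$ into the constant $C$ in the exponent yields the stated tail bound.

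For (i), I would use the layer-cake identity
\begin{align*}
      \expect \|X\|^p
      &= \int_0^\infty p t^{p-1} \probability(\|X\| \geq t) \dif t,
\end{align*}
substitute the tail bound from (ii), and evaluate the resulting Gaussian-type integral by the change of variable $u = t^2/K^2$, producing $\expect \|X\|^p \leq p K^p \Gamma(p/2)$. Taking $p$-th roots and using Stirling's bound $\Gamma(p/2)^{1/p} \leq C \sqrt{p}$ gives $\|X\|_p \leq C K \sqrt{p}$, where $C$ absorbs an additional $p^{1/p}$ factor (which is bounded uniformly in $p \geq 1$).

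There is no serious obstacle: both steps are textbook manipulations, and the only subtlety is bookkeeping of absolute constants and the use of Stirling to pass from $\Gamma(p/2)^{1/p}$ to $\sqrt{p}$. The statement is quoted from \cite{vershynin_high-dimensional_2018}, so I would simply cite that reference for the scalar case and remark that the definitions of $\|\cdot\|$ used in the present paper (operator norm for matrices, Euclidean norm for vectors) render the above derivation identical in the vector/matrix case, since only the scalar quantity $\|X\|$ ever enters the argument.
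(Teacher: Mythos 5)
Your proof is the standard textbook argument (Markov/Chernoff for the tail, layer-cake plus Stirling for the moments) that Vershynin uses, and the paper offers no proof of its own beyond the citation to \cite[Chapter~2]{vershynin_high-dimensional_2018}, so your approach and the paper's are the same. One small precision point: Markov actually gives \(\probability(\|X\| \ge t) \le 2\exp(-t^2/K^2)\), and the leading factor of \(2\) cannot literally be absorbed into the exponent uniformly over \emph{all} \(t \ge 0\) (consider a nonzero constant \(X\)); the conventional reading, shared by the paper's statement, is that the bound holds up to this factor or for \(t \gtrsim K\), and your moment calculation in part~(i) is unaffected since carrying the \(2\) through the integral only changes the absolute constant \(C\).
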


\begin{fact}[{\cite[Proposition 2.6.1]{vershynin_high-dimensional_2018}}]
      \label{fact:subgaussian-sum}
      Let \(\{X_i\}_{i = 1}^n\) be a sequence of  vector- or matrix-valued random variables.
      Then
      \begin{gather*}
            \left\|
            \sum_{i = 1}^n X_i
            \right\|_{\psi_2}^2
            \leq C \sum_{i= 1}^n \left\|X_i\right\|_{\psi_2}^2.
      \end{gather*}
\end{fact}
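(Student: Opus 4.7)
The stated $\sqrt{n}$ rate cannot hold without further hypotheses: perfectly correlated $X_1=\cdots=X_n$ gives $\|\sum_i X_i\|_{\psi_2}=n\|X_1\|_{\psi_2}$, and a common nonzero mean $\mu$ makes $\|\sum_i X_i\|$ scale linearly in $n$. So I read the Fact as tacitly assuming the $X_i$ are independent and mean zero, matching the scalar analogue referenced from \cite[Chap.~2]{vershynin_high-dimensional_2018}, and will prove it in that setting.

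My plan has two steps. \emph{Step 1 (scalar independent sum).} If $Y_1,\dots,Y_n$ are independent, mean-zero, scalar sub-gaussians with $\|Y_i\|_{\psi_2}\le K$, then MGF factorization yields $\expect\exp(\lambda\sum_i Y_i)=\prod_i \expect\exp(\lambda Y_i)\le\exp(C\lambda^2 nK^2)$, so $\|\sum_i Y_i\|_{\psi_2}\le C\sqrt{n}\,K$ (this is Vershynin's Prop.~2.6.1). \emph{Step 2 (lift directionally).} For any unit vectors $u,v$ of the appropriate dimensions, $|\langle X_i v,u\rangle|\le \|X_i\|$ pointwise (operator norm), hence $\|\langle X_i v,u\rangle\|_{\psi_2}\le \|X_i\|_{\psi_2}$ directly from the paper's definition. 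Because $\langle\,\cdot\,v,u\rangle$ is a fixed linear functional, the $\langle X_i v,u\rangle$ inherit independence and zero mean, so Step 1 gives $\|\langle\sum_i X_i\,v,u\rangle\|_{\psi_2}\le C\sqrt{n}\,\max_i\|X_i\|_{\psi_2}$ uniformly in $u,v$.

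To convert directional control back to the paper's operator-norm sub-gaussian norm, I would fix a $\tfrac12$-net $\mathcal{N}$ of the unit sphere(s), whose cardinality depends only on the ambient dimensions $\mathsf{d}_y,\mathsf{d}_\phi$, and use $\|Y\|\le 2\max_{v\in\mathcal{N}}|\langle Y,v\rangle|$ (or its bilinear matrix analogue). Applying Fact~\ref{fact:subgaussian-tail} to turn each directional $\psi_2$ norm into a tail, a union bound over $\mathcal{N}$ to bound the tail of $\|\sum_i X_i\|$, and then integrating the tail recovers the $\psi_2$ norm at the cost of an extra $\sqrt{\log|\mathcal{N}|}$ factor. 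Since this factor depends only on dimension, it is absorbed into $C$ under the paper's conventions. The only real obstacle is this bookkeeping of the dimension constants from the net; no probabilistic tool beyond MGF factorization is required.
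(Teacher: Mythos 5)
Your proof is correct and follows the route the paper intends: no proof is given in the text, which simply presents this alongside Fact~\ref{fact:subgaussian-tail} as ``trivially adapted from'' \cite[Chapter~2]{vershynin_high-dimensional_2018}, and your combination of scalar MGF factorization (Vershynin's general Hoeffding, Prop.~2.6.1) with a net lift to operator norm, absorbing the $\sqrt{\log|\mathcal{N}|}$ penalty into the dimension-dependent constant $C$, is precisely that adaptation. You are also right that the Fact as written is false without tacit independence and zero-mean hypotheses (your two counterexamples are sound); both hypotheses hold everywhere the Fact is actually invoked---Proposition~\ref{prop:local-dependence-q} applies it only within residue classes modulo $N$, which are independent and mean zero by assumption, and Proposition~\ref{prop:local-polynomial-filter-fluctuation} applies it to independent mean-zero noises $\{w_k\}$---so what you found is an imprecision in the statement rather than a flaw in the downstream analysis. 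One small technicality: a $\frac12$-net suffices for the vector norm bound $\|Y\|\le 2\max_{v\in\mathcal{N}}|\langle Y,v\rangle|$, but the bilinear analogue for operator norm needs a finer net (e.g.\ $\frac14$-nets give $\|A\|\le 2\max_{u,v\in\mathcal{N}}\langle Au,v\rangle$); this changes only the dimension-dependent constant and nothing else.
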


\begin{proposition}[Local dependence in \(L^q\)]
      \label{prop:local-dependence-q}
      Let \(q \geq 2\).
      Suppose that \(\{X_i\}_{i = 1}^n\) are random variables
      satisfying
      \begin{align*}
            \expect X_i &= 0, &&\forall i \in [n]
            \\
            \max_{i \in [n]} \left\|X_i\right\|_{\psi_2} &= \nu, &&\forall i \in [n]
            \\
            X_i &\perp X_j  &&\forall i, j \in [n] \text{ with } |i -j| \geq N
      \end{align*}
      Then
      \begin{align*}
            \left\|
            \sum_{i = 1}^n X_i
            \right\|_{\psi_2}
            &\leq C \sqrt{n N} \nu.
      \end{align*}
\end{proposition}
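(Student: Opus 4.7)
The plan is a standard block decomposition for sums of locally dependent sub-gaussian variables. I partition $[n]$ into the $N$ arithmetic progressions $G_k = \{i \in [n] : i \equiv k \pmod N\}$, $k = 0, 1, \ldots, N-1$, and write $\sum_{i=1}^n X_i = \sum_{k=0}^{N-1} S_k$, where $S_k = \sum_{i \in G_k} X_i$. Any two distinct indices within a single coset $G_k$ differ by a nonzero multiple of $N$, hence lie at distance at least $N$; by hypothesis, the corresponding summands are independent. I interpret the pairwise statement in the natural mutual sense, as is justified in the paper's application where each $X_i$ is a function of the noise in a window of radius $O(N)$ around index $i$, so any set of pairwise $N$-separated indices depends on disjoint blocks of noise and is hence jointly independent.

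Within a single block, I apply Fact~\ref{fact:subgaussian-sum} to the independent, zero-mean sub-gaussian sum $S_k$: since $|G_k| \leq \lceil n/N \rceil$ and $\max_i \|X_i\|_{\psi_2} = \nu$, this yields
$$\|S_k\|_{\psi_2} \leq C\sqrt{n/N}\,\nu$$
uniformly in $k$. Then across the $N$ blocks, I use the triangle inequality for the $\psi_2$ norm (which is a genuine Orlicz-space norm):
$$\left\|\sum_{i=1}^n X_i\right\|_{\psi_2} = \left\|\sum_{k=0}^{N-1} S_k\right\|_{\psi_2} \leq \sum_{k=0}^{N-1} \|S_k\|_{\psi_2} \leq N \cdot C\sqrt{n/N}\,\nu = C\sqrt{nN}\,\nu,$$
which is the claim.

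The only subtlety is the interpretive one already flagged: to benefit from the quadratic-in-variables scaling of the independent sub-gaussian sum, the pairwise independence in the hypothesis must be upgraded to mutual independence of any pairwise $N$-separated collection. No other step is technically hard. The extra $\sqrt{N}$ relative to the fully independent rate $\sqrt{n}\nu$ reflects the cost of dependence across residue classes, and is the standard price of this block argument.
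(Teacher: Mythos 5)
Your proof is correct and follows essentially the same argument as the paper: decompose $[n]$ into the $N$ residue classes mod $N$, apply Fact~\ref{fact:subgaussian-sum} within each class (where indices are $N$-separated and hence independent), and use the $\psi_2$ triangle inequality across the $N$ classes. The subtlety you flag is genuine and worth noting --- the hypothesis as written only asserts pairwise independence of $N$-separated pairs, while the $\sqrt{|I_k|}$ scaling needs mutual independence within each residue class (and indeed Fact~\ref{fact:subgaussian-sum} as stated in the paper omits the independence hypothesis it requires); the paper's proof does not remark on this, so your observation is a small improvement in rigor rather than a departure in method.
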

\begin{proof}
      For \(k \in [N]\), define the index sets
      \(I_k = \cbr{i \in [n]: i \cong k\ (\operatorname{mod} N)}\).
      \begin{align}
            \sum_{i = 1}^n X_i
            &= \sum_{k = 1}^N \sum_{i \in I_k} X_i
            \\
            \left\|
            \sum_{i = 1}^n X_i
            \right\|_{\psi_2}
            &\leq \sum_{k = 1}^N
            \left\|\sum_{i \in I_k} X_i\right\|_{\psi_2}
            \tag{triangle inequality}\\
            &\leq C \sum_{k = 1}^N
            \sqrt{|I_k|}\nu
            \tag{Fact~\ref{fact:subgaussian-sum}}
            \\
            &\leq C \sqrt{n N} \nu
      \end{align}
\end{proof}

\subsection{Term-by-term bounds}

\begin{lemma}
      \label{lem:concentration}
      The random matrices \(Z^\intercal X\) and \(Z^\intercal Y\) satisfy
      \begin{align*}
            \left\|
                  Z^\intercal X - \expect Z^\intercal X
            \right\|_{\psi_2}
            &\leq C \sqrt{nN} \tilde \nu_{zx}
            \\
            \left\|
            Z^\intercal X - Z^\intercal X^\star
            \right\|_{\psi_2}
            &\leq C\del{
                  n \bar \nu_{zx}
                  + \sqrt{nN} \tilde \nu_{zx}
            }
            \\
            \left\|
            Z^\intercal Y - Z^\intercal Y^\star
            \right\|_{\psi_2}
            &\leq C \del{
                  n \bar \nu_{zy}
                  + \sqrt{nN} \tilde \nu_{zy}
            }
      \end{align*}
\end{lemma}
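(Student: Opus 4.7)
The plan is to treat all three bounds with a common skeleton: expand each matrix as a sum over \(i\in[1\ldots n]\) of rank-one contributions, peel off any bias so that the remainder is centered, and apply Proposition~\ref{prop:local-dependence-q} to that remainder. The hypotheses of that proposition---mean zero, a common subgaussian bound, and \(N\)-dependence---are handed to us by the data bounds and the local independence assumption in Theorem~\ref{thm:consistency}.

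For the first bound I would write
\begin{align*}
Z^\intercal X - \expect Z^\intercal X &= \sum_{i=1}^n (z_i x_i^\intercal - \expect z_i x_i^\intercal).
\end{align*}
Each summand is centered by construction, has centered subgaussian norm at most \(\tilde\nu_{zx}\) by the definition of \(\tilde\nu_{zx}\), and summands with index gap \(\geq N\) are independent by hypothesis (b). Proposition~\ref{prop:local-dependence-q} with \(\nu = \tilde\nu_{zx}\) immediately yields \(C\sqrt{nN}\,\tilde\nu_{zx}\).

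For the second bound I would split
\begin{align*}
Z^\intercal X - Z^\intercal X^\star &= (Z^\intercal X - \expect Z^\intercal X) + (\expect Z^\intercal X - Z^\intercal X^\star).
\end{align*}
The first bracket is handled as in the previous step, contributing \(C\sqrt{nN}\,\tilde\nu_{zx}\). The second bracket is \(\sum_i(\expect z_i x_i^\intercal - z_i x_i^{\star\intercal})\); each summand has operator norm at most \(\bar\nu_{zx}\) almost surely (this is exactly what the \(\bar\nu_{zx}\) definition provides), so the whole sum is a.s.\ bounded by \(n\bar\nu_{zx}\). Because an a.s.\ bounded random matrix has \(\|\cdot\|_{\psi_2}\) controlled by its sup norm up to a universal constant, this bracket contributes at most \(Cn\bar\nu_{zx}\) in \(\psi_2\)-norm, and the triangle inequality for \(\|\cdot\|_{\psi_2}\) merges the two contributions. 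The third bound follows by exactly the same argument with \(y\) in place of \(x\).

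The main obstacle is really bookkeeping: one must make sure the decomposition separates cleanly into a centered, \(N\)-dependent fluctuation governed by \(\tilde\nu\) and an almost-surely bounded bias governed by \(\bar\nu\), with nothing extra slipping in between. Once that separation is in place, the proof reduces to two ingredients already in hand---the triangle inequality for \(\|\cdot\|_{\psi_2}\) and Proposition~\ref{prop:local-dependence-q}---so no new probabilistic machinery is required.
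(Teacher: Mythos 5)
Your proof of the first bound is correct and matches the paper's intent: the summands \(z_i x_i^\intercal - \expect z_i x_i^\intercal\) are centered by construction, have centered subgaussian norm at most \(\tilde\nu_{zx}\) by the definition of \(\tilde\nu_{zx}\), and are \(N\)-dependent by hypothesis, so Proposition~\ref{prop:local-dependence-q} applies directly.

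Your second and third bounds have a genuine gap, however, and it traces to your reading of \(\bar\nu_{zx}\). You treat \(\bar\nu_{zx}\) as an almost-sure bound on the operator norm of the random matrix \(\expect z_i x_i^\intercal - z_i(x_i^\star)^\intercal\). But look at how Corollary~\ref{cor:consistency} verifies this data bound: the step
\(\left\|\expect z_i y_i^\intercal - z_i(y_i^\star)^\intercal\right\| = \left\|\expect z_i \,\expect(y_i - y_i^\star)^\intercal\right\|\)
(tagged ``independence of \(z_i\) and \(y_i\)'') is only an equality if the \(\expect\) binds both terms, i.e.\ \(\bar\nu_{zy} = \sup_i \left\|\expect\left[z_i y_i^\intercal - z_i(y_i^\star)^\intercal\right]\right\|\) is the operator norm of a deterministic matrix, not an essential-supremum bound on a random one. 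Under that reading, your second bracket \(\expect Z^\intercal X - Z^\intercal X^\star = \sum_i \left(\expect z_i x_i^\intercal - z_i(x_i^\star)^\intercal\right)\) is still a random matrix, since \(z_i(x_i^\star)^\intercal\) is random; nothing in the hypotheses makes its norm a.s.\ at most \(n\bar\nu_{zx}\), and the ``a.s.\ bounded \(\Rightarrow\) controlled \(\psi_2\)-norm'' step does not engage.

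To repair the argument you would need to insert the deterministic intermediary \(\expect Z^\intercal X^\star\):
\begin{equation*}
Z^\intercal X - Z^\intercal X^\star
= \left(Z^\intercal X - \expect Z^\intercal X\right)
+ \expect\!\left[Z^\intercal X - Z^\intercal X^\star\right]
+ \left(\expect Z^\intercal X^\star - Z^\intercal X^\star\right).
\end{equation*}
The middle term is deterministic with operator norm at most \(n\bar\nu_{zx}\), hence has \(\psi_2\)-norm at most \(Cn\bar\nu_{zx}\). The first term is exactly your first bound. But the third term, \(\sum_i(\expect z_i - z_i)(x_i^\star)^\intercal\), is a centered sum whose per-term \(\psi_2\)-norm is \(\sup_i\left\|(z_i - \expect z_i)(x_i^\star)^\intercal\right\|_{\psi_2}\) — a quantity the stated hypotheses of Theorem~\ref{thm:consistency} do not name. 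It is plausibly intended to be absorbed into \(\tilde\nu_{zx}\) (and in the paper's concrete setting the truncation \(\|z_i\|\le\mu\) and boundedness of \(x_i^\star\) give a bound of the same order), but your proof cannot simply assert it; you would need to state and use such a bound explicitly, which is an additional idea beyond your triangle-inequality-plus-Proposition~\ref{prop:local-dependence-q} skeleton.
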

\begin{proof}
      This follows from combining assumption \ref{assum:data-bounds}
      with Proposition~\ref{prop:local-dependence-q}.
\end{proof}

\begin{lemma}[Bounds on \(S^{-1}\)]
      \label{lemma:combined-bound}
      For all \(r \geq 1\), the matrix \(S^{-1}\) satisfies
      \begin{align*}
            \left\|S^{-1}\right\|_r
            &\leq
            \gamma(r; \lambda, \sigma^2, \sqrt{nN} \tilde \nu_{zx})
      \end{align*}
      where \(\gamma\) is the function defined in Proposition~\ref{lem:three-way-integral}.
\end{lemma}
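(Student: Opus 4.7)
\textbf{Plan for the proof of Lemma~\ref{lemma:combined-bound}.} The strategy is to reduce the operator-norm inverse $\|S^{-1}\|$ to the one-dimensional form $1/(a + 0 \vee (b - W))$ that appears in Proposition~\ref{lem:three-way-integral}, with $W$ a scalar random variable of controlled subgaussian tail. Since singular value clipping replaces each singular value $\sigma_i$ of $Z^\intercal X$ by $\max(\lambda, \sigma_i)$ while preserving singular vectors, the square matrix $S$ is invertible and
\begin{align*}
\left\|S^{-1}\right\| = \frac{1}{\min_i \max(\lambda, \sigma_i(Z^\intercal X))} = \frac{1}{\max(\lambda, \sigma_{\min}(Z^\intercal X))}.
\end{align*}
The key algebraic rewrite $\max(\lambda, c) = \lambda + 0 \vee (c - \lambda)$ then matches exactly the denominator structure in Proposition~\ref{lem:three-way-integral}.

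Next I would lower-bound $\sigma_{\min}(Z^\intercal X)$ by combining Weyl's inequality for singular values with the persistence-of-excitation assumption (\ref{assumption:persistence}):
\begin{align*}
\sigma_{\min}(Z^\intercal X) \;\geq\; \sigma_{\min}\bigl(\expect Z^\intercal X\bigr) - \left\|Z^\intercal X - \expect Z^\intercal X\right\| \;\geq\; \sigma^2 - W,
\end{align*}
where $W := \left\|Z^\intercal X - \expect Z^\intercal X\right\|$. Plugging this into the previous display yields the pointwise (almost-sure) bound
\begin{align*}
\left\|S^{-1}\right\| \;\leq\; \frac{1}{\lambda + 0 \vee \bigl((\sigma^2 - \lambda) - W\bigr)},
\end{align*}
which is exactly the functional form treated in Proposition~\ref{lem:three-way-integral} with $a = \lambda$ and $b = \sigma^2 - \lambda$.

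The final step is to supply the subgaussian tail required of $W$. Lemma~\ref{lem:concentration} gives $\left\|Z^\intercal X - \expect Z^\intercal X\right\|_{\psi_2} \leq C\sqrt{nN}\,\tilde\nu_{zx}$, and Fact~\ref{fact:subgaussian-tail} converts this into $\probability(W \geq t) \leq \exp(-t^2/K^2)$ with $K = C\sqrt{nN}\,\tilde\nu_{zx}$. Taking the $L^r$ norm of the pointwise bound and invoking Proposition~\ref{lem:three-way-integral} gives the claimed estimate, with the constant $C$ absorbed into the third argument of $\gamma$ as done elsewhere in the paper. The only sanity check is that the three-way integral requires $0 < a < b$, i.e.\ $2\lambda < \sigma^2$, which is the mild calibration already implicit in the $\lambda \ll n$ regime of Corollary~\ref{cor:consistency}. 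There is no genuine obstacle here; the technical heavy lifting has been encapsulated in Proposition~\ref{lem:three-way-integral}, and the work of this lemma is essentially the scalar reduction via Weyl's inequality sketched above.
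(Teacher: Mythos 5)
Your proof matches the paper's argument step for step: reduce to $\|S^{-1}\| = 1/\sigma_{\min}(S)$, rewrite $\max(\lambda, \cdot)$ via $\max(a,b) = a + (b-a)_+$, lower-bound $\sigma_{\min}(Z^\intercal X)$ by Weyl plus the persistence-of-excitation hypothesis, take $W = \|Z^\intercal X - \expect Z^\intercal X\|$, and feed the resulting scalar quantity into Proposition~\ref{lem:three-way-integral} with the subgaussian tail from Lemma~\ref{lem:concentration} and Fact~\ref{fact:subgaussian-tail}. One small point in your favor: you correctly identify the parameters as $a = \lambda$, $b = \sigma^2 - \lambda$, which is what the derived form $1/(\lambda + 0 \vee ((\sigma^2 - \lambda) - W))$ actually requires and is also what Theorem~\ref{thm:consistency} uses, whereas the lemma statement and the last line of the paper's own proof write $b = \sigma^2$; that appears to be a typographical slip in the paper, not a substantive discrepancy in your argument.
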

\begin{proof}
      By the SVD,
      \begin{align}
            \left\|S^{-1}\right\| =
      \frac{1}{\sigma_\text{min}(S)},
      \label{eq:inverse-bound}
      \end{align}
      so our next task is to bound \(\sigma_\text{min}(S)\) from below.
      We have 
      \begin{align}
            \sigma_\text{min}(S) &=
            \max\del{
                  \lambda, \sigma_\text{min}(Z^\intercal X)
            }
            \tag{by \(S = \clip[\lambda]{Z^\intercal X}\)}\\
            &=
            \lambda +
            \del{
                  \sigma_\text{min}(Z^\intercal X) - \lambda
            }_+
            \tag{by the identity \(\max(a, b) = a + (b - a)_+\)}
            \\
            &\geq
            \lambda +
            \del{
                  \sigma_\text{min}\del{
                        \expect Z^\intercal X
                  } - \lambda - \sigma_\text{max}\del{
                        Z^\intercal X - \expect Z^\intercal X
                  }
            }_+
            \tag{Weyl's inequality}
            \\
            &\geq
            \lambda +
            \del{
                  \sigma^2 - \lambda - D
            }_+, \quad D = \sigma_\text{max}\del{
                        Z^\intercal X - \expect Z^\intercal X
                  }
            \tag{persistence of excitation hypothesis}
            \\
      \end{align}
      By Lem~\ref{lem:concentration},
      \(Z^\intercal X - \expect Z^\intercal X\) is subgaussian with constant \(C\sqrt{nN} \tilde \nu_{zx}\).
      By Fact~\ref{fact:subgaussian-tail}, for all \(t\geq 0\)
      \begin{align*}
            \probability\del{
                  \left\|Z^\intercal X - \expect Z^\intercal X\right\| \geq t
            }
            &\leq \exp\del{
                  -\frac{t^2}{\left(C\sqrt{nN} \tilde \nu_{zx}\right)^2}
            }.
      \end{align*}
      Now \eqref{eq:inverse-bound} becomes
      \begin{align}
            \left\|\hat S^{-1}\right\|
            &\leq \frac{1}{\lambda + (\sigma^2 - \lambda- D)_+},
      \end{align}
      which is amenable to Lemma~\ref{lem:three-way-integral} with constants
      \(a= \lambda\), \(b = \sigma^2\), and \(K = L_{ZX} = \sqrt{nN} \tilde \nu_{zx}\).
\end{proof}

\section{Construction of local polynomial filters}
\label{section:filtering-details}

\begin{proposition}
\label{prop:local-polynomial-filter}
For any window size \(N > 0\), step size \(h > 0\) and location \(i_0 \in \mathbb{R}\), there exist coefficients \(\mathbf D^{k, i_0, N}_{d, h}\), \(d \in [0\ldots m]\), \(k \in [1\ldots N]\), such that for all polynomials \(f\) of degree at most \(p - 1 < N\),
\begin{align*}
    \dod[d]{}{x}f(i_0h)
    &= \sum_{k = 1}^N \mathbf D^{k, i_0, N}_{d, h} f(kh).
    \intertext{Considered as a matrix in \((d, k)\),}
    \left\|
      \mathbf D^{\cdot, i_0, N}_{\cdot, h}
    \right\|
    &\leq C(p, i_0) N^{-m - \frac{1}{2}} h^{-m}.
\end{align*}
\end{proposition}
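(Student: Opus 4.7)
My plan is to recast the existence claim as an underdetermined linear-algebra problem and control the minimum-norm solution by a Vandermonde-to-Hilbert rescaling argument.

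First, by linearity of both sides of the identity, it suffices to verify the filter exactness on the monomial basis \(f_j(x) = x^j\) for \(j \in [0, p-1]\). Evaluating each side yields the linear system \(V\, D^{i_0,d}_{h,\cdot} = b^{i_0,d}_h\), where \(V \in \mathbb{R}^{p \times N}\) is the Vandermonde matrix with \(V_{jk} = (kh)^j\) and \(b^{i_0,d}_{h,j} = \frac{j!}{(j-d)!}(i_0 h)^{j-d}\mathbf{1}_{j \geq d}\). Since \(N \geq p\) and the nodes \(\{kh\}\) are distinct, \(V\) has full row rank. I would define \(D^{i_0,d}_{h,\cdot}\) to be the minimum-\(\ell^2\)-norm solution, \(V^\intercal (VV^\intercal)^{-1} b^{i_0,d}_h\), which simultaneously secures existence and gives a handle on the norm.

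The main work is a quantitative lower bound on the singular values of \(V\). I would factor \(V = D_1 W\) with \(D_1 = \operatorname{diag}((Nh)^j)_{j=0}^{p-1}\) and \(W_{jk} = (k/N)^j\), so that \(VV^\intercal = D_1 (WW^\intercal) D_1\). The matrix entries \((WW^\intercal)_{jj'} = \sum_k (k/N)^{j+j'}\) are Riemann sums for \(N\int_0^1 s^{j+j'}\,ds\); hence \(WW^\intercal/N\) converges as \(N\to\infty\) to the positive-definite Hilbert-type matrix \(H_{jj'} = 1/(j+j'+1)\). A quantitative version of this convergence, combined with the nonvanishing of the Vandermonde Gramian at each individual \(N \geq p\), should give a uniform bound \(\lambda_{\min}(WW^\intercal) \geq c(p)N\).

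From this, \(\|D^{i_0,d}_{h,\cdot}\|^2 \leq c(p)^{-1} N^{-1}\|D_1^{-1} b^{i_0,d}_h\|^2\). A direct computation gives \((D_1^{-1} b^{i_0,d}_h)_j = \frac{j!}{(j-d)!}(i_0/N)^{j-d} h^{-d} N^{-d}\), whose \(\ell^2\) norm is at most \(C(p, i_0) h^{-d} N^{-d}\). This yields the per-row bound \(\|D^{i_0,d}_{h,\cdot}\| \leq C(p,i_0) h^{-d} N^{-d - 1/2}\). Finally, taking the Frobenius sum over \(d \in [0, m]\) and noting that in the standard bandwidth regime \(Nh \gtrsim 1\) the \(d = m\) term dominates, I recover the claimed operator-norm bound \(C(p,i_0) N^{-m - 1/2} h^{-m}\).

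The delicate point is the uniform-in-\(N\) lower bound \(\lambda_{\min}(WW^\intercal) \geq c(p) N\): the Riemann-sum convergence handles the large-\(N\) regime, while Vandermonde distinctness handles each fixed \(N \geq p\), and some care is needed to stitch the two together into a single constant depending only on \(p\). Once this is in hand, the remaining steps reduce to Cauchy--Schwarz and explicit bookkeeping on the Vandermonde entries.
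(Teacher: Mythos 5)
Your approach is essentially the paper's, just presented in a slightly different coordinate frame. The paper sets up the minimum-Frobenius-norm convex program \(\min_D \|D\|_{\frob}\) subject to \(DA = B\), where \(A_{ij} = (i-i_0)^j h^j / N^j\) is the \emph{centered and pre-normalized} Vandermonde-type matrix, and observes that \((A^\intercal A)_{jk}\) is a Riemann sum for the Hilbert-type matrix \(N/(j+k+1)\), yielding \(\sigma_{\min}(A) \gtrsim \sqrt{N}\). Since the Frobenius-norm objective and the constraint both decouple over \(d\), this is literally the same computation as your per-\(d\) min-\(\ell^2\)-norm solution \(V^\intercal(VV^\intercal)^{-1}b\); the only differences are (i) you use the uncentered monomials \(x^j\) and carry \(i_0 h\) in the right-hand side, whereas the paper expands around \(i_0 h\), which makes \((i-i_0)/N\) automatically bounded and the constant's dependence on \(i_0\) a bit more transparent, and (ii) you do the \(N^j\)-rescaling after the fact via the factorization \(V = D_1 W\), whereas the paper bakes it into the definition of \(A\). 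Both routes land on the same quantitative ingredient you correctly identify as the crux: a uniform-in-\(N\) lower bound \(\lambda_{\min}\) of the rescaled Gram matrix of order \(N\), which the paper states with the error estimate \(|S_{jk}| \leq 2p/N\).

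One small slip: you write that the \(d = m\) term dominates the sum over rows ``in the standard bandwidth regime \(Nh \gtrsim 1\).'' The direction is reversed. Since the per-row bound is \(C\,N^{-d-1/2}h^{-d} = C\,(Nh)^{-d}N^{-1/2}\), the \(d = m\) term dominates when \((Nh)^{-d}\) is increasing in \(d\), i.e.\ when \(Nh \lesssim 1\). That is indeed the relevant regime here — in Corollary~\ref{cor:consistency} the optimal scaling is \(N \sim h^{-2p/(2p+1)}\), so \(Nh \sim h^{1/(2p+1)} \to 0\). The paper's own proof has the same implicit restriction (through the unstated estimate \(\|B\| = \max_d d!/(Nh)^d\)), so this is not a gap in your argument so much as a sign error worth fixing.
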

\begin{proof}
We prescribe \(\mathbf D\) as a solution to the following convex program:
\begin{align}
  \label{eq:D-program}
  \begin{split}
     & \min_{\mathbf D \in \mathbb{R}^{m \times N}} \quad \left\|\mathbf D\right\|_{\frob}
    \\
     & \operatorname{subject\ to}\quad
    \mathbf D A = B
  \end{split}
\end{align}
where \(A \in \mathbb{R}^{N \times p}\) and \(B \in \mathbb{R}^{(m+1) \times p}\) are given by
\begin{subequations}
  \label{eq:D-program-constraints}
  \begin{align}
    A_{ij}
     & =
    \left. \frac{(x - i_0h)^j}{N^j}\right|_{x = ih}
    = \frac{(i - i_0)^j h^j}{N^j}
    \\
    B^d_{j}
     & =
    \left. \frac{\dod[d]{}{x} (x - i_0h)^j}{N^j}\right|_{x = i_0h}
    = \delta_{dj} \frac{d!}{N^d h^d}
    \\
    i
     & \in [1\ldots N]
    \\
    j
     & \in [0\ldots p - 1]
    \\
    d
     & \in [0\ldots m]
  \end{align}
\end{subequations}

  \textbf{Explicit solution}
  Write the Frobenius inner product as \(\left\langle X, Y\right\rangle_{\frob} = \trace \del{X^\intercal Y}\).
  Let \(\Lambda \in \mathbb{R}^{(m + 1)\times p}\) be a Lagrange multiplier, and
  form the Lagrangian \(\frac{1}{2} \left\langle D, D\right\rangle_{\frob} - \left\langle \Lambda, DA - B\right\rangle_{\frob}\).
  First-order optimality yields \(D = \Lambda A^\intercal\).
  Right-multiplying by \(A\), we get \(B = \Lambda (A^\intercal A)\) which can be solved for \(\Lambda\).
  The result is the min-norm solution  \(D = B (A^\intercal A)^{-1} A^\intercal\).
  
  To bound \(D\), use
  \begin{align}
    \left\|D\right\|
     &\leq \left\| B \right\| \left\|(A^\intercal A)^{-1}A^\intercal\right\| 
     \\
     &\leq \frac{\left\| B \right\|}{\sigma_\text{min} (A)}
  \end{align}

  \textbf{Estimates}
  To estimate \(\sigma_\text{min} (A) = \lambda_\text{min} (A^\intercal A)^{1/2}\), notice that
  \begin{align}
    \del{A^\intercal A}_{jk}
     & = \sum_{i = 1}^N \del{\frac{i - i_0}{N}}^{j + k}
    \\
    \intertext{is a right Riemann sum. Evaluating the integral (with an error estimate),}
        \del{\tilde A^\intercal \tilde A}_{jk}
     & = \frac{N}{j + k + 1} + S_{jk}                   \\
    \left|S_{jk}\right|
     & \leq \frac{2p}{N}.
    \label{eq:hilbert-matrix}
    \intertext{As a consequence of this rescaling, we have the estimate}
    \left\| \del{\tilde A^\intercal \tilde A}^{-1} \right\|
     & \leq C(p) N^{-1}.
    \label{eq:tilde-A-bound}
  \end{align}
  \end{proof}

\begin{remark}[Numerics of \(\mathbf D\)]
  Note that
   \(A^\intercal A\) is a notoriously ill-conditioned Hilbert matrix \eqref{eq:hilbert-matrix}.
  For numerical stability, we solve for \(\mathbf D\) by rewriting the conditions \eqref{eq:D-program-constraints} in a basis of Legendre polynomials.
\end{remark}

\begin{proposition}[Bias]
\label{prop:local-polynomial-filter-bias}
  For some \(i_0\), \(h\), and \(N\), let \(\mathbf D\) be result of Proposition~\ref{prop:local-polynomial-filter}.
  Let \([x_0, x_1]\) be an interval and \(f \in C^p([x_0, x_1], \mathbb R)\) with \(R_p := \sup_{t \in [x_0, x_1]} |f^{(p)}(t)|\).
  Assume that \(d < p\).
  Then
  \begin{align*}
    \left|
      \dod[d]{f}{x}(x + i_0h)
      - \sum_{k = 1}^N \mathbf D^{k, i_0, N}_{d, h} f(x + kh)
    \right|
     & \leq
    C(m, p) R_p (Nh)^{p - m}.
  \end{align*}
\end{proposition}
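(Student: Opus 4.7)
The plan is to exploit the polynomial-exactness of $D^{i_0, d}_{h, \cdot}$ established in Proposition~\ref{prop:local-polynomial-filter} by Taylor-expanding $f$ around the evaluation point $x + i_0 h$. Specifically, write $f(y) = P_{p-1}(y) + R(y)$, where $P_{p-1}$ is the degree-$(p-1)$ Taylor polynomial centered at $x + i_0 h$ and $R$ is the Lagrange remainder satisfying $|R(y)| \le R_p \, |y - x - i_0 h|^p / p!$ on $[x_0, x_1]$. Because the filter reproduces the $d$th derivative of any polynomial of degree at most $p-1$, and because $d < p$, we get $\sum_{k=1}^N D^{i_0, d}_{h, k} \, P_{p-1}(x + kh) = P_{p-1}^{(d)}(x + i_0 h) = f^{(d)}(x + i_0 h)$. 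The entire filter error therefore collapses to the remainder contribution $-\sum_{k=1}^N D^{i_0, d}_{h, k}\, R(x + kh)$.

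To bound this remainder sum I would apply Cauchy--Schwarz. On the filter side, the row $D^{i_0, d}_{h, \cdot}$ has $\ell^2$ norm at most the full matrix operator norm, which Proposition~\ref{prop:local-polynomial-filter} controls by $C N^{-m - 1/2} h^{-m}$. On the data side, each sample point lies within $O(Nh)$ of the expansion point $x + i_0 h$ (given the filter geometry $i_0, k \in [1, N]$), so $|R(x + kh)| \le C R_p (Nh)^p$, yielding an $\ell^2$ norm over the $N$ summands of at most $C R_p \sqrt{N}\, (Nh)^p$. Multiplying these two bounds gives
\[
\left|\sum_{k=1}^N D^{i_0, d}_{h, k}\, R(x + kh)\right|
\le C R_p \, N^{-m - 1/2} h^{-m} \cdot \sqrt{N}\, (Nh)^p
= C R_p (Nh)^{p-m},
\]
which is the claimed bound.

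The main substantive point, and what makes the plan work, is insisting on the Taylor expansion centered at the target point $x + i_0 h$ rather than at a grid point: this both makes the cancellation step close (because $P_{p-1}^{(d)}(x + i_0 h) = f^{(d)}(x + i_0 h)$ exactly, using the hypothesis $d < p$) and furnishes the tight $|y - x - i_0 h|$ scale in the remainder so that all sample points are uniformly within $O(Nh)$ of the expansion center. A minor regularity check is that $x + i_0 h$ and each $x + kh$ lie inside $[x_0, x_1]$ so Taylor's theorem applies, which is implicit in the filter's intended use on an $N$-point window inside the data domain. No other step is more than a routine norm estimate, so the whole argument is essentially a two-line reduction once the expansion point is chosen correctly.
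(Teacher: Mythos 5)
Your proposal is correct and follows the paper's proof essentially step for step: Taylor expansion of $f$ centered at the target point $x + i_0 h$, cancellation of the polynomial part by the exactness constraint from Proposition~\ref{prop:local-polynomial-filter}, Cauchy--Schwarz on the remainder sum with $\|D_{d,\cdot}\|_2 \le \|D\| \le C N^{-m-1/2} h^{-m}$ and $\|R\|_2 \le \sqrt{N}\, C R_p (Nh)^p$, and multiplying to land on $C R_p (Nh)^{p-m}$. Your emphasis on choosing the expansion center at $x + i_0 h$ (so that $P_{p-1}^{(d)}(x+i_0 h) = f^{(d)}(x+i_0 h)$ exactly, using $d < p$) is precisely the key point of the paper's argument.
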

\begin{proof}
  Expanding around \(x + i_0h\),
  \begin{align}
    f(x + kh)
     & = \sum_{\nu = 0}^{p - 1}
    \frac{f^{(\nu)}(x + i_0h)}{\nu!}((k - i_0)h)^\nu + R(k),
    \\
    |R(k)|
     & \leq
    C(m, p)\, R_p\, (Nh)^{p}.
    \label{eq:D-remainder}
  \end{align}
  Contracting the \(d\)th row of \(\mathbf D\) with \(\{f(x + kh)\}_{k=1}^N\),
  \begin{align}
    \sum_{k = 1}^N \mathbf D^k_d f(x + k h)
     & =
    \sum_{\nu = 0}^{p - 1}
    \frac{f^{(\nu)}(x + i_0h)}{\nu!}
    \sum_{k = 1}^N \mathbf D^k_d ((k - i_0)h)^\nu
    + \sum_{k = 1}^N \mathbf D^k_d R(k),
    \\
     & =
    f^{(d)}(x + i_0 h) + \sum_{k = 1}^N \mathbf D^k_d R(k),
  \end{align}
  where the last equality uses the constraints \eqref{eq:D-program-constraints}.
  Therefore,
  \begin{align}
    \left|
      f^{(d)}(x + i_0 h) - \sum_{k = 1}^N \mathbf D^k_d f(x + k h)
    \right|
     & \leq
    \left\|\mathbf D_{\cdot, d}\right\|_2 \, \left\| R(\cdot) \right\|_2
    \\
    \intertext{By Cauchy--Schwarz, $\big|\sum_{k=1}^N \mathbf D^k_d R(k)\big| = |\langle \mathbf D_{\cdot,d}, R\rangle| \le \|\mathbf D_{\cdot,d}\|_2 \, \|R\|_2$. Moreover, $\|R\|_2 \le \sqrt{N}\,\sup_k |R(k)|$ and $\|\mathbf D_{\cdot,d}\|_2 \le \|\mathbf D\|$, yielding:}
    & \leq
    \left\|\mathbf D\right\| \, \sqrt{N} \, \sup_k |R(k)|.
  \end{align}
  By Proposition~\ref{prop:local-polynomial-filter}, \(\left\|\mathbf D\right\| \leq C(p, i_0) N^{-m - \frac{1}{2}} h^{-m}\).
  Combining with \eqref{eq:D-remainder} yields the claim.
\end{proof}

\begin{proposition}[Fluctuation]
  \label{prop:local-polynomial-filter-fluctuation}
  For some \(i_0\), \(h\), and \(N\), let \(\mathbf D\) be the result of Proposition~\ref{prop:local-polynomial-filter}.
  Let \(\{w_k\}_{k=1}^N\) be independent, mean-zero, and subgaussian with \(\nu := \max_{k \in [1\ldots N]}\left\|w_k\right\|_{\psi_2} < \infty\).
  Then for any \(d < p\),
  \begin{align*}
    \left\|
      \sum_{k = 1}^N \mathbf D^{k, i_0, N}_{d, h} w_k
    \right\|_{\psi_2}
    &\leq C(p, i_0)\, \nu\, N^{-m - \frac{1}{2}} h^{-m}.
  \end{align*}
\end{proposition}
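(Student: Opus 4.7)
The plan is to exploit subgaussian additivity for independent summands together with the operator-norm estimate on $D$ from Proposition~\ref{prop:local-polynomial-filter}. Since each $w_k$ has mean zero and the coefficients $D^{i_0,d}_{h,k}$ are deterministic, the subtracted expectation vanishes, and it suffices to bound $\left\|S_d\right\|_{\psi_2}$, where $S_d := \sum_{k=1}^N D^{i_0,d}_{h,k} w_k$.

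First I would set $X_k := D^{i_0,d}_{h,k} w_k$. Each $X_k$ is mean-zero and subgaussian with $\left\|X_k\right\|_{\psi_2} = |D^{i_0,d}_{h,k}|\,\left\|w_k\right\|_{\psi_2} \le |D^{i_0,d}_{h,k}|\,\nu$, since the subgaussian norm is positively homogeneous. Because the $w_k$ are independent, I would invoke the classical Hoeffding-type inequality for sums of independent centered subgaussians (the sharper cousin of Fact~\ref{fact:subgaussian-sum}, as in Vershynin, Prop.\ 2.6.1) to get
\[
\left\|S_d\right\|_{\psi_2}^2 \;\le\; C\sum_{k=1}^N \left\|X_k\right\|_{\psi_2}^2 \;\le\; C\nu^2 \sum_{k=1}^N \bigl(D^{i_0,d}_{h,k}\bigr)^2 \;=\; C\nu^2\,\left\|D_{d,\cdot}\right\|_2^2.
\]
Since the $\ell^2$-norm of a row is bounded by the Frobenius norm, and Frobenius is bounded by a dimension-dependent constant times the operator norm, one obtains $\left\|D_{d,\cdot}\right\|_2 \le \left\|D\right\|_{\frob} \le \sqrt{m+1}\,\left\|D\right\|$. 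Proposition~\ref{prop:local-polynomial-filter} already supplies $\left\|D\right\| \le C(p,i_0)\,N^{-m-\frac{1}{2}}h^{-m}$, and taking square roots in the display above yields the claim.

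The main obstacle is essentially notational rather than mathematical: Fact~\ref{fact:subgaussian-sum} as stated gives the coarser bound $\sqrt{n}\max_k\left\|X_k\right\|_{\psi_2}$, which would replace $\left\|D_{d,\cdot}\right\|_2$ by $\sqrt{N}\max_k|D^{i_0,d}_{h,k}|$ and lose a factor that matters for the final constants; to avoid this I would either upgrade Fact~\ref{fact:subgaussian-sum} to the standard $\ell^2$-weighted form or cite the Hoeffding-type inequality directly. Beyond that, the proof is simply a mechanical combination of the deterministic filter bound with subgaussian sub-additivity.
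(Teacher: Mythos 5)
Your proposal is correct and follows essentially the same route as the paper: peel the coefficients into the $\psi_2$-norm, bound the resulting weighted sum, and finish with the operator-norm estimate $\left\|D\right\| \le C(p,i_0)N^{-m-1/2}h^{-m}$ from Proposition~\ref{prop:local-polynomial-filter}. What you do better is flag, and correctly repair, a real wobble in the paper's own argument. The paper invokes Fact~\ref{fact:subgaussian-sum} to get $C\sqrt{N}\max_k\lvert a_k\rvert\,\nu$ and then writes $\sqrt{N}\max_k\lvert a_k\rvert \le \left\|D_{d,\cdot}\right\|_2$; this inequality points the wrong way, since in general $\left\|D_{d,\cdot}\right\|_2 \le \sqrt{N}\max_k\lvert a_k\rvert$. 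If one used Fact~\ref{fact:subgaussian-sum} honestly one would only reach a bound of order $\sqrt{N}\left\|D\right\| \sim N^{-m}h^{-m}$, losing a factor of $\sqrt{N}$ relative to the stated conclusion. Your substitution of the $\ell^2$-weighted Hoeffding-type inequality (Vershynin, Prop.\ 2.6.1) for the coarser Fact~\ref{fact:subgaussian-sum} is exactly the fix needed: it yields $\left\|S_d\right\|_{\psi_2} \le C\nu\left\|D_{d,\cdot}\right\|_2$ directly and rigorously, after which the desired bound follows as you describe. One small simplification: the detour through $\left\|D\right\|_{\frob} \le \sqrt{m+1}\left\|D\right\|$ is unnecessary, since a single row of a matrix already satisfies $\left\|D_{d,\cdot}\right\|_2 = \left\|e_d^\intercal D\right\|_2 \le \left\|D\right\|$; this is harmless here because the paper allows constants to depend on dimension, but it is cleaner to go straight to the operator norm.
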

\begin{proof}
  Write \(a_k = \mathbf D^k_d\).
  Since \(\expect w_k = 0\), the sum equals \(\sum_{k=1}^N a_k w_k\).
  By Fact~\ref{fact:subgaussian-sum},
  \[
  \left\|\sum_{k=1}^N a_k w_k\right\|_{\psi_2}^2 \leq C \sum_{k=1}^N |a_k|^2 \|w_k\|_{\psi_2}^2 \leq C \nu^2 \sum_{k=1}^N a_k^2 = C \nu^2 \|\mathbf D_{\cdot, d}\|_2^2.
  \]
  Finally, \(\|\mathbf D_{\cdot, d}\|_2 \le \|\mathbf D\|\) and Proposition~\ref{prop:local-polynomial-filter} gives
  \(\|\mathbf D\| \le C(p, i_0) N^{-m - 1/2} h^{-m}\), which completes the proof.
\end{proof}
\section{Proof of Corollary~\ref{cor:consistency}}
      The latent model equation \(Y^\star = X^\star \theta^\star\) holds if the hatted operators \(\hat {\mathcal{H}}\), \(\hat {\mathcal G}\) are replaced by the true operators.
      and \(\theta^\star = \theta_0\).
The \(n\) of the theorem should be replaced with the number of filter windows \(\sim n\).
The \(N\) of the theorem applies to the \(N\) of the estimator.
Now we turn to verifying the data bounds. For \(\bar \nu_{zy}\),
\begin{align*}
      \left\|\expect z_i y_i^\intercal - z_i (y_i^\star)^\intercal\right\|
      &= 
      \left\|\expect z_i \expect \del{y_i^\intercal - (y_i^\star)^\intercal}\right\|
      \tag{independence of \(z_i\) and \(y_i\)}
      \\
      &\leq
      \left\|\expect z_i\right\|
      \left\|\expect \del{y_i^\intercal - (y_i^\star)^\intercal}\right\|
      \\
      &\leq
      \mu
      \left\|\expect \del{y_i^\intercal - (y_i^\star)^\intercal}\right\|
      \tag{boundedness of \(z_i\) \S\ref{subsec:filters-why-mu}}
      \\
      &\leq
      C \mu
      (Nh)^{p - d}.
      \tag{bias of Lemmas~\ref{lem:local-polynomial-filter}, \ref{lem:filters-independence}}
      \intertext{where \(d=1\) in continuous-time and \(d=0\) in discrete-time. For \(\bar \nu_{zx}\), the same reasoning yields}
      \left\|\expect z_i x_i^\intercal - z_i (x_i^\star)^\intercal\right\|
      &=
      \left\|\expect z_i \expect \del{x_i^\intercal - (x_i^\star)^\intercal}\right\|
      \tag{independence of \(z_i\) and \(x_i\)}
      \\
      &\leq
      \left\|\expect z_i\right\|
      \left\|\expect \del{x_i^\intercal - (x_i^\star)^\intercal}\right\|
      \\
      &\leq
      \mu
      \left\|\expect \del{x_i^\intercal - (x_i^\star)^\intercal}\right\|
      \tag{boundedness of \(z_i\) \S\ref{subsec:filters-why-mu}}
      \\
      &\leq
      C \mu
      \sbr{(Nh)^{p} + N^{-1/2}}.
      \tag{bias and subgaussian norm of Lemmas~\ref{lem:local-polynomial-filter}, \ref{lem:filters-independence}, propagated through Lipschitz \(\phi\), Assumption~\ref{assum:phi-lip}}
      \intertext{For \(\tilde \nu_{zy}\),}
      \left\|\expect z_i y_i^\intercal - z_i y_i^\intercal\right\|_{\psi_2}
      &\leq
      \left\|z_i y_i^\intercal\right\|_{\psi_2}
      \tag{centering}
      \\
      &\leq \left\|z_i\right\|_{\psi_2}
      \left\|y_i^\intercal\right\|_{\psi_2}
      = 
      \left\|z_i\right\|_{\psi_2}
      \left\|y_i^\star + \del{\expect y_i - y_i^\star} + \del{\expect y_i - \expect y_i}\right\|_{\psi_2}
      \\
      &\leq C \mu
      \del{|y_i^\star| + (Nh)^{p - d} + N^{-d -1/2} h^{-d}}
      \tag{subgaussian norms Lemmas~\ref{lem:local-polynomial-filter}, \ref{lem:filters-independence}}.
\end{align*}
where \(d=1\) in continuous-time and \(d=0\) in discrete-time.
The same bound applies to \(\tilde \nu_{zx}\) after taking \(d=0\).
Balancing the two terms,
\begin{align*}
      (Nh)^{p - d} = N^{-d - 1/2} h^{-d}
      \implies
      N = h^{-2p/(2p + 1)}.
\end{align*}
Thus the ideal scaling is
\begin{align*}
      \bar \nu_{zy} &\lesssim \mu h^{(p-d)/(2p + 1)}
      \\
      \bar \nu_{zx} &\lesssim \mu h^{p/(2p + 1)}
      \\
      \tilde \nu_{zy} &\lesssim \mu \del{\sup_i |y_i^\star| + h^{(p-d)/(2p + 1)}}
      \\
      \tilde \nu_{zx} &\lesssim \mu \del{\sup_i |x_i^\star| + h^{p/(2p + 1)}}
\end{align*}
The function \(\gamma = \gamma_\text{head} + \gamma_\text{body} + \gamma_\text{tail}\) becomes:
\begin{align*}
      \gamma_{\text{head}}
      &= \frac{2}{\sigma^2 - \lambda} \lesssim \frac{1}{n - \lambda}
      \\
      \intertext{and}
      \gamma_{\text{tail}}(r; a, b, K)
      &=
      \frac{1}{\lambda}
      \exp\sbr{
            -\frac{
                  C (\sigma^2)^2
            }{
                  r \del{\sqrt{n h^{-2p/(2p + 1)}} \mu \sup_i |y_i^\star|}^2
            }
      }
      \\
      &\lesssim
      \frac{1}{\lambda} \exp\del{-C n^3 h^{2p/(2p + 1)}},
\end{align*}
with \(\gamma_\text{body}\) having a subdominant contribution.
Thus the condition for \(\gamma_\text{head}\) to dominate is that for all \(C\),
\begin{align*}
      \exp\del{-C n^3 h^{2p/(2p + 1)}}
      \ll \lambda
      \ll n.
\end{align*}
This also renders the first term in the conclusion of Theorem~\ref{thm:consistency} to be of subdominant order, resulting in the scaling in \(n\) and \(h\),
\begin{align*}
      \left\|\hat\theta - \theta^*\right\|_q
      &\lesssim
      \del{\bar \nu_{zx} + \bar \nu_{zy}}
      +
      \sqrt{\frac{N}{n}} \del{
            \tilde \nu_{zx} + \tilde \nu_{zy}
      }
      \\
      &\lesssim
      h^{(p-d)/(2p + 1)} + \sqrt{
            \frac{1}{
                  n h^{2p/(2p + 1)}
            }
      }.
\end{align*}

\section{Supplement to \S\ref{sec:lorenz}}
\label{section:lorenz-supplement}
\subsection{Data-generating process}
We are given \(n\) measurements of \(\xi\) at \(\{ih\}_{i = 1}^n\), with i.i.d.~Gaussian noise of mean zero and variance \(\eta\):
\begin{align*}
      z_i &= \xi(ih) + \mathcal{N}(0, \eta)
\end{align*}
The number of measurements is \(n = 100000\); the sampling period is \(h = 0.001\).

The true parameter is given by
\begin{align*}
      \theta_0
      &=
      \begin{pmatrix}
            0 & 0 & 1 \\
            -10 & 28 & 0 \\
            10 & -1 & 0 \\
            0 & 0 & -8/3 \\
            0 & 0 & 1 \\
            0 & -1 & 0
      \end{pmatrix},
\end{align*}
and the initial condition is \(\xi(0) = (-8, 8, 27)\).

In the input signal, the excitation frequency is \(f = 1\).

\subsection{Continuous-time estimator}
We approximate these operators using local polynomial regression with a filter window size \(N=100\) and accuracy order \(p=75\).
The measurement noise variance is \(\eta = 0.1\).
The estimator hyperparameters are \(\lambda = 10\) and \(\mu= 200\).

\subsection{Discrete-time estimator}
We approximate these operators using local polynomial regression with a filter window size \(N=100\) and accuracy order \(p=75\).
The measurement noise variance is \(\eta = 1\).
The estimator hyperparameters are \(\lambda = 10\) and \(\mu= 200\).

\subsection{Reporting}
All values are normalized by the Frobenius norm of the (pseudo-) true parameter.
The bias is computed as the Frobenius distances between the mean of the estimator and the (pseudo-) true parameter.
The standard deviation is computed as the quadratic mean of the Frobenius distance between the estimator and its mean.
The root mean square error is computed as the quadratic mean of the Frobenius distance between the estimator and the (pseudo-) true parameter.

We run 2000 Monte Carlo trials for each estimator and compute standard errors by bootstrapping.

\subsection{Kernel density plots}
\begin{figure*}
      \includegraphics[width=\linewidth]{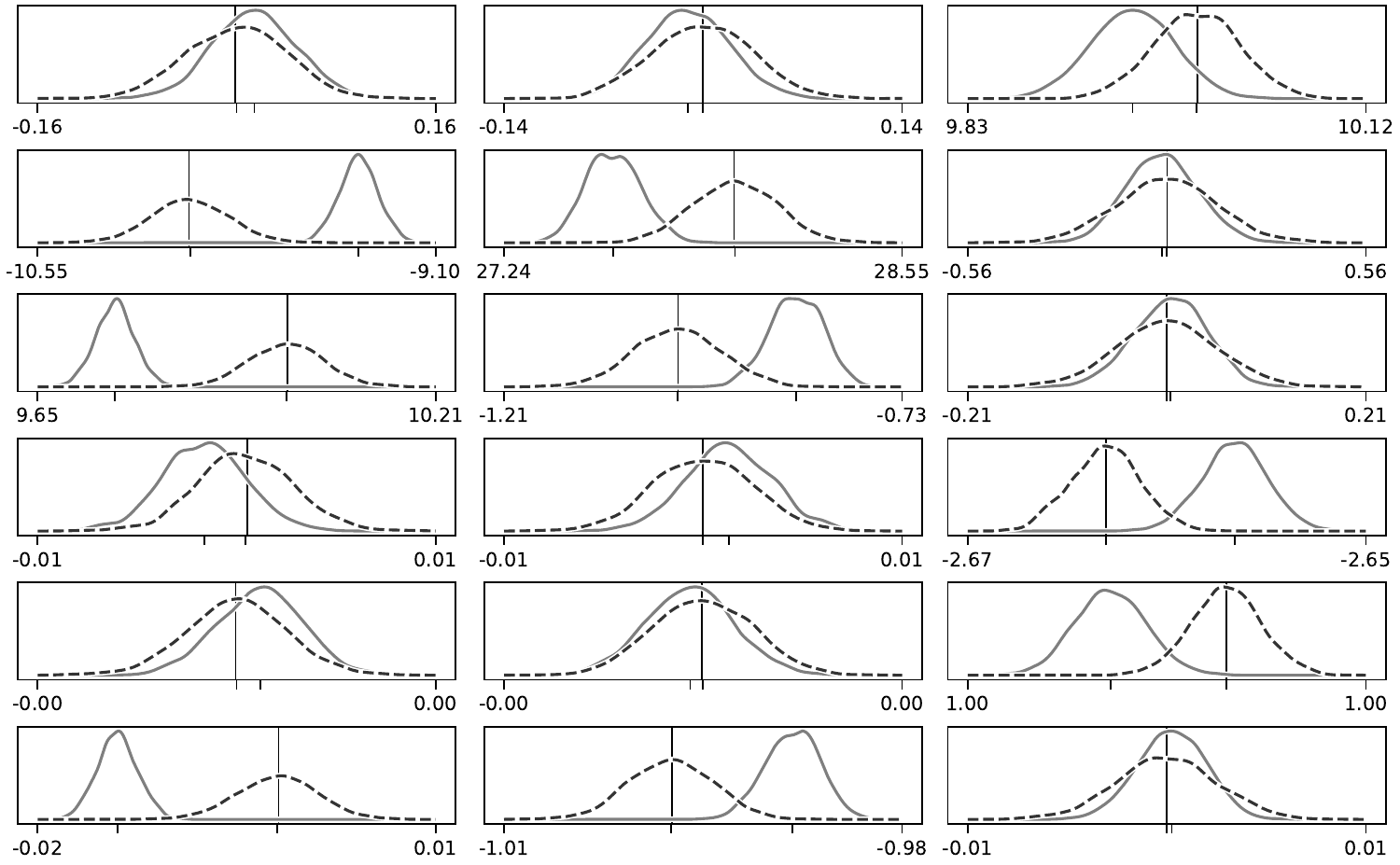}
      \caption{
            \label{fig:lorenz-continuous-distribution}
            Continuous-time Lorenz: elementwise marginal kernel density estimates of the sampling distributions
            of our estimator (dashed) and a baseline estimator (solid).
            Vertical line indicates ground truth; ticks indicate mean of sampling distribution.
      }
\end{figure*}

\begin{figure*}
      \includegraphics[width=\linewidth]{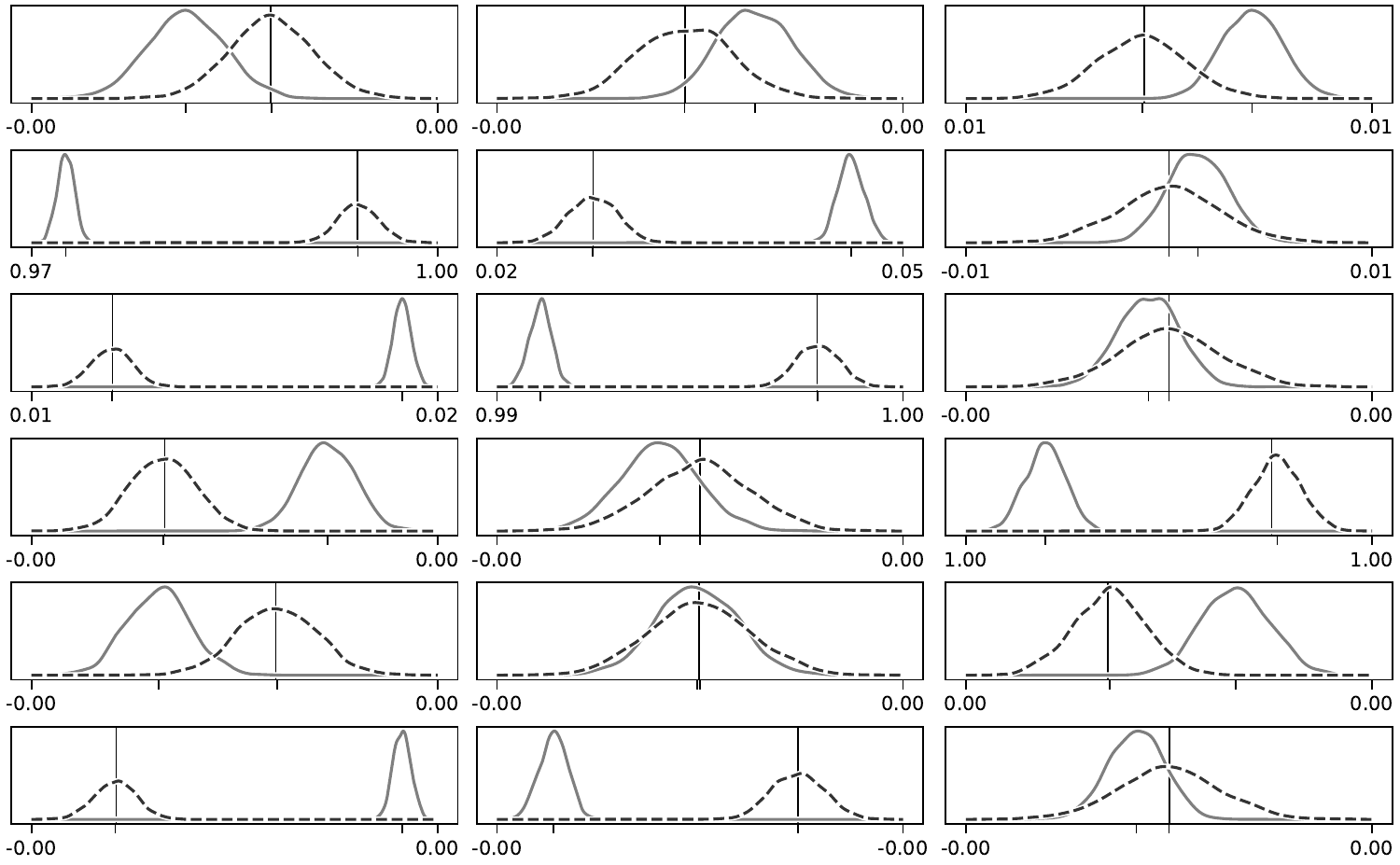}
      \caption{
            \label{fig:lorenz-discrete-distribution}
            Discrete-time Lorenz:
            Elementwise marginal kernel density estimates of the sampling distributions
            of our estimator  (dashed) and a baseline estimator  (solid).
            Vertical line indicates pseudo-true value; ticks indicate mean of sampling distribution.
      }
\end{figure*}
\section{Supplement to \S\ref{subsec:vanderpol}}
\label{section:vanderpol-supplement}
\subsection{Data-generating process}
We observe \(n\) scalar measurements of \(x(t)\) at \(\{ih\}_{i=1}^n\), with i.i.d.~Gaussian noise of mean zero and variance \(\eta\):
\begin{align*}
      z_i &= x(ih) + \mathcal{N}(0, \eta)
\end{align*}
The number of measurements is \(n = 100000\), the sampling period is \(h = 0.001\).
The Van der Pol parameter is \(\mu = 2\) and the initial condition is \((x(0), \dot x(0)) = (2, 0)\).

\subsection{Continuous-time estimator}
We approximate \(\mathcal H\) and \(\mathcal G\) using local polynomial regression with a filter window size \(N = 100\) and accuracy order \(p = 20\).
The stencil outputs derivatives of order \(d = 0, 1, 2\): rows \(d = 0, 1\) implement \(\hat{\mathcal G}\) (or \(\tilde{\mathcal G}\)) and row \(d = 2\) implements \(\hat{\mathcal H}\).
The measurement noise variance is \(\eta = 10^{-4}\).
The estimator hyperparameters are \(\lambda = 1\) and \(\mu = 200\).

\subsection{Reporting}
All values are normalized by the Euclidean norm of the true parameter \(\theta_0\).
The bias is computed as the Euclidean distance between the mean of the estimator and \(\theta_0\).
The standard deviation is computed as the quadratic mean of the Euclidean distance between the estimator and its mean.
The root mean square error is computed as the quadratic mean of the Euclidean distance between the estimator and \(\theta_0\).

We run 2000 Monte Carlo trials for each estimator and compute standard errors by bootstrapping.

% \subsection{Kernel density plot}
\begin{table}
      \centering
      \begin{tabular}{lrrr}
\toprule
Estimator & abs. bias (\%) & std (\%) & rmse (\%) \\
\midrule
Instrumental Variables (ours) & \num{0.489 \pm 0.258} & \num{13.205 \pm 0.211} & \num{13.214 \pm 0.211} \\
Least Squares & \num{17.533 \pm 0.018} & \num{1.622 \pm 0.022} & \num{17.608 \pm 0.018} \\
\bottomrule
\end{tabular}

      \caption{\label{tab:vanderpol-continuous-error}%
      Comparison of bias, standard deviation, and root mean square error of our estimator and a least squares estimator for the parameter of the Van der Pol oscillator.
      Monte Carlo standard errors in parentheses.
      }
\end{table}

\end{document}